\documentclass[letterpaper,10pt,conference]{ieeeconf}

\usepackage[T1]{fontenc}
\usepackage{amsmath,amssymb,amsfonts,bm}

\usepackage{amsthm}
\usepackage{algorithm,algorithmic}
\usepackage{booktabs}
\usepackage{graphicx}
\usepackage{tabularx}
\usepackage{xcolor}
\usepackage{url}
\usepackage{stfloats}
\usepackage{flushend}

\IEEEoverridecommandlockouts
\newtheorem{thm}{Theorem}
\newtheorem{lemma}{Lemma}
\newtheorem{proposition}{Proposition}
\newtheorem{corollary}{Corollary}
\theoremstyle{definition}

\newtheorem{assumption}{Assumption}

\newcommand{\R}{\mathbb{R}}
\newcommand{\M}{\mathcal{M}}
\newcommand{\X}{\mathbb{X}}
\newcommand{\Pone}{\mathcal{P}_1}
\newcommand{\cW}{\mathcal{W}_{\bm v,B}}
\newcommand{\cA}{\mathcal{A}}
\newcommand{\cL}{\mathcal{L}}
\newcommand{\dd}{\,\mathrm{d}}
\newcommand{\ind}{\mathbf{1}}
\newcommand{\eps}{\varepsilon}

\title{\LARGE\bf
Mode-Weighted Transport Certificates for State-Dependent Reflected Switching Diffusions}

\author{Yutong Zhu and Ye Zhang%
\thanks{Corresponding author: Ye Zhang.}%
\thanks{The authors are with the School of Astronautics, Northwestern Polytechnical University, Xi'an 710072, China.}}

\begin{document}
\maketitle
\thispagestyle{empty}
\pagestyle{empty}

\begin{abstract}
State-dependent switching diffusions can be contractive in distribution even when some modes are individually expansive. We develop a computable transport-based condition for such contraction on $\R^n$ and on compact convex domains with normal reflection. The transport cost combines mode-dependent spatial weights with a discrete mode penalty while preserving spatial separation for cross-mode pairs. Using synchronous coupling of the Brownian motions and maximal coupling of the state-dependent jump clocks, we derive separate generator inequalities for same-mode and cross-mode configurations. Convex normal reflection contributes a nonpositive finite-variation term, so the same conditions apply to the associated no-flux Fokker-Planck-Kolmogorov system. Their feasibility guarantees global pairwise exponential contraction of the Markov semigroup and weak measure solutions, with unit prefactor and an explicit rate. For a fixed ordering of the mode weights and a prescribed decay rate, the conditions are affine in the spatial weights and graph costs and form a semi-infinite linear feasibility problem. A finite-mesh condition with a Lipschitz margin certifies the inequalities over the full domain. A reflected one-dimensional example validates the distributional computation, and a planar three-mode example demonstrates the synthesis procedure for transition rates depending on both state coordinates.
\end{abstract}

\section{Introduction}

Switching diffusions couple a continuous stochastic state to a finite-state mode. When the transition rates depend on position, the discrete environment and the continuous dynamics form a feedback loop. Well-posedness and Feller properties for such processes are developed in \cite{nguyen2017properties,nguyen2025hybrid}, while Lyapunov, matrix, and recurrence criteria show that switching can stabilize a system even if some modes are not stable in isolation \cite{basak1999stability,wu2013stability,shao2014stability}. The state dependence that makes this mechanism useful also complicates a direct stability test: the rate matrix changes along the continuous trajectory, and two copies at different positions no longer share the same jump clocks.

Coupling and transport methods compare probability laws directly \cite{villani2009optimal,chen2021optimal}. For switching processes, exponential convergence has been obtained from Wasserstein contraction of the constituent dynamics and weak Harris arguments \cite{cloez2015exponential}, from $M$-matrix and Perron-Frobenius criteria \cite{shao2015ergodicity}, and from contraction-on-average combined with a Lyapunov function \cite{tong2016moment}. State-dependent switching has also been treated through successful coupling and numerical approximation \cite{shao2018invariant}, comparison of transition mechanisms \cite{shao2024comparison}, and piecewise-constant approximation of the rate matrix \cite{shao2023ergodicity}. Related Wasserstein results cover subgeometric convergence \cite{lazic2022subgeometric} and functional switching diffusions with memory \cite{shi2022ergodicity}.

The estimate sought here is global pairwise contractivity,
\begin{equation*}
 \mathcal W(\mu P_t,\widetilde\mu P_t)
 \le e^{-\eta t}\mathcal W(\mu,\widetilde\mu),\qquad t\ge0,
\end{equation*}
for every pair of initial laws, with constant one and a computable rate $\eta$. This objective differs from convergence to a previously established invariant law. It also requires a cost that remains informative before two coupled modes meet. For example, the hybrid distance in \cite{cloez2015exponential} assigns a fixed cost to unequal modes and uses spatial separation only when the modes agree. The mode-weighted cost introduced below instead retains a continuous term in both configurations. That term is what closes the cross-mode generator inequality.

On a bounded domain, the boundary condition must be part of the stochastic argument. A no-flux Fokker-Planck-Kolmogorov (FPK) equation is represented by a reflected diffusion, not by an unconstrained It\^o equation. Reflected diffusions and their Skorokhod formulations are classical \cite{tanaka1979stochastic,lions1984stochastic}, and reflected regime-switching examples already occur in Wasserstein analysis \cite{shao2015ergodicity}. For a pairwise estimate, however, the local-time terms must be retained until their sign is determined. Convexity makes the normal-reflection contribution to the Euclidean separation nonpositive.

The resulting certificate has two parts. Same-mode inequalities combine one-sided drift contraction, simultaneous jumps, and the unmatched-clock penalty caused by rate sensitivity. Cross-mode inequalities account for drift mismatch and for every jump of either component. A synchronous Brownian coupling and maximal coupling of equal-target clocks then give the semigroup estimate. A resolvent argument transfers it to weak measure solutions of the no-flux FPK system. Once the ordering of the mode weights and $\eta$ are fixed, all certificate inequalities are affine in the weights and graph costs. Constraint generation therefore applies, and an explicit mesh buffer distinguishes a numerical screen from a proof over the full domain.

\section{Reflected Switching Diffusion and the FPK System}
\label{sec:model}

\subsection{Hybrid state process}

Let $\M=\{1,\ldots,M\}$ and let $\Omega$ be either $\R^n$ or a compact convex domain with $C^2$ boundary. For $x\in\partial\Omega$, let $\boldsymbol n(x)$ be the outward unit normal. The hybrid state space is $\X=\Omega\times\M$.

The continuous component satisfies the normally reflected equation
\begin{equation}
 \dd X_t=f_{\sigma_t}(X_t)\dd t+\sqrt{2\nu}\,\dd W_t
 -\boldsymbol n(X_t)\dd K_t, 
 \label{eq:reflected_sde}
\end{equation}
and the mode has conditional transition rates
\begin{align}
 &\Pr(\sigma_{t+\Delta}=j\mid\mathcal F_t,X_t=x,\sigma_t=i)\notag\\
 &\hspace{3em}=\lambda_{ij}(x)\Delta+o(\Delta),\quad j\ne i.
 \label{eq:switch_rates}
\end{align}
where $\nu>0$, $W_t$ is an $n$-dimensional Brownian motion, and $K_t$ is continuous, nondecreasing, and supported on the boundary:
\begin{equation}
 K_0=0,\qquad \int_0^t\ind_{\{X_s\in\Omega^\circ\}}\dd K_s=0.
 \label{eq:local_time_support}
\end{equation}
For $\Omega=\R^n$, the reflection term is absent and $K_t\equiv0$. The switching does not reset $X_t$. We set $\lambda_{ii}(x)=-\sum_{j\ne i}\lambda_{ij}(x)$.

Figure \ref{fig:hybrid_geometry} separates the two objects that enter the analysis. The arrows $\lambda_{ij}(x)$ describe physical mode switches of the Markov process, whereas $c_{\bm v,B}((x,i),(y,j))$ is the comparison cost assigned to a pair of hybrid states. In particular, $\beta_{ij}$ is a transport penalty, not a transition rate.

\begin{figure}[htbp]
\centering
\includegraphics[width=0.92\linewidth]{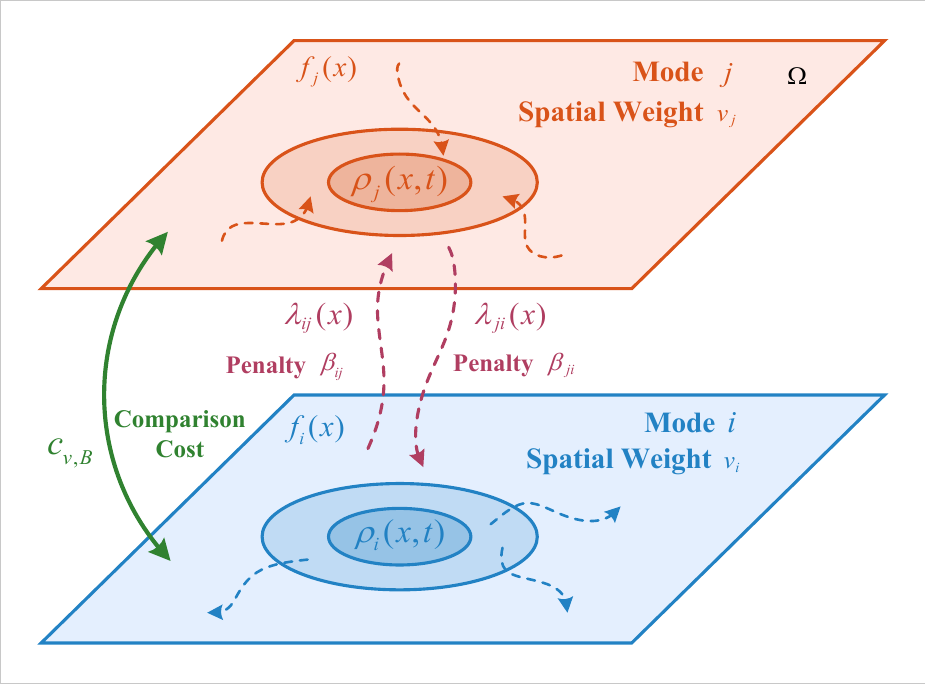}
\caption{Hybrid geometry of the model and certificate. Each mode supports a reflected continuous probability flow on $\Omega$ and the state-dependent clocks connect the mode layers. The comparison cost retains both the weighted spatial separation and the discrete penalty when the two copies occupy different modes.}
\label{fig:hybrid_geometry}
\end{figure}

\begin{assumption}[Coefficient regularity and generator core]
\label{ass:wellposed}
Each $f_i$ is globally Lipschitz on $\Omega$ and has at most linear growth when $\Omega=\R^n$. Each $\lambda_{ij}$, $i\ne j$, is nonnegative, bounded, and Lipschitz. The Neumann test-function class introduced below is a core for the generator of the Feller semigroup on $C_0(\X)$.
\end{assumption}

The reflected diffusion is well posed under the stated geometric and regularity assumptions \cite{tanaka1979stochastic,lions1984stochastic}. Interlacing it with the bounded state-dependent jump clocks gives a conservative, nonexplosive strong Markov process. The corresponding unconstrained state-dependent construction and its Feller properties are treated in \cite{nguyen2017properties}.

\subsection{Generator, no-flux equation, and weak solutions}

Let $\mathcal D_N$ be the generator core from Assumption \ref{ass:wellposed}. Its elements $\varphi=(\varphi_1,\ldots,\varphi_M)$ are twice continuously differentiable and satisfy $\partial_{\boldsymbol n}\varphi_i=0$ on $\partial\Omega$. On $\R^n$, the boundary condition is void and a compactly supported smooth core may be used. On this class, the generator is
\begin{align}
 (\cA\varphi)_i(x)
 &=f_i(x)^\top\nabla\varphi_i(x)+\nu\Delta\varphi_i(x)\notag\\
 &\quad+\sum_{j\ne i}\lambda_{ij}(x)
 [\varphi_j(x)-\varphi_i(x)].
 \label{eq:generator}
\end{align}
The corresponding density equation is
\begin{equation}
 \partial_t\rho_i=-\nabla\!\cdot(f_i\rho_i)+\nu\Delta\rho_i
 +\sum_{j=1}^M\lambda_{ji}\rho_j,\qquad i\in\M,
 \label{eq:fpk}
\end{equation}
with no-flux boundary condition
\begin{equation}
 \big(f_i\rho_i-\nu\nabla\rho_i\big)^\top\boldsymbol n=0
 \quad\text{on }\partial\Omega.
 \label{eq:no_flux}
\end{equation}
The scalar isotropic diffusion in \eqref{eq:reflected_sde} gives the Neumann domain in \eqref{eq:generator} and its adjoint boundary condition is exactly \eqref{eq:no_flux}.

A narrowly continuous curve $\mu_t=(\mu_{1,t},\ldots,\mu_{M,t})$ is a weak measure solution if, for every $\varphi\in\mathcal D_N$,
\begin{align}
 \sum_i\int_\Omega\varphi_i\dd\mu_{i,t}
 &=\sum_i\int_\Omega\varphi_i\dd\mu_{i,0}\notag\\
 &\quad+\int_0^t\sum_i\int_\Omega
 (\cA\varphi)_i\dd\mu_{i,s}\dd s.
 \label{eq:weak_fpk}
\end{align}

\begin{proposition}[Semigroup-FPK correspondence]
\label{prop:semigroup_fpk}
Under Assumption \ref{ass:wellposed}, $\mu_t=\mu_0P_t$ satisfies \eqref{eq:weak_fpk}. Conversely, every narrowly continuous probability-valued weak solution of \eqref{eq:weak_fpk} equals $\mu_0P_t$.
\end{proposition}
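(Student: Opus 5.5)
The forward direction comes from the Feller semigroup theory in Assumption \ref{ass:wellposed}; the converse is a uniqueness argument by duality through the resolvent.

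\textbf{Forward direction.} Let $(P_t)$ be the Feller semigroup on $C_0(\X)$ with generator $\cA$, and take $\varphi\in\mathcal D_N$. Since $\mathcal D_N$ lies in the generator domain, the Kolmogorov identity
\begin{equation*}
 P_t\varphi-\varphi=\int_0^tP_s\cA\varphi\dd s
\end{equation*}
holds in sup norm. Both sides are bounded continuous functions on $\X$, so we integrate them against $\mu_0$. Fubini, which applies because $\|P_s\cA\varphi\|_\infty\le\|\cA\varphi\|_\infty$, gives
\begin{equation*}
 \langle\mu_0P_t,\varphi\rangle-\langle\mu_0,\varphi\rangle=\int_0^t\langle\mu_0P_s,\cA\varphi\rangle\dd s.
\end{equation*}
This is \eqref{eq:weak_fpk}. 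Narrow continuity of $t\mapsto\mu_0P_t$ follows from strong continuity of $P_t$ on $C_0(\X)$ and conservativeness of the process. These two facts give vague continuity plus conservation of mass, and that upgrades to narrow continuity.

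When $\Omega=\R^n$ there is a technical point: $\cA\varphi$ must be bounded and continuous. This holds because $\varphi$ is compactly supported, even though $f_i$ has linear growth. On a compact convex $\Omega$, every $\varphi\in\mathcal D_N$ is $C^2$ up to the boundary, so $\cA\varphi\in C(\X)$. Alternatively, one can apply It\^o's formula to $\varphi_{\sigma_t}(X_t)$. The local-time term $-\int\nabla\varphi_{\sigma_s}^\top\boldsymbol n\dd K_s$ vanishes by the Neumann condition together with the support property \eqref{eq:local_time_support}. The jump martingale has bounded compensator because the rates $\lambda_{ij}$ are bounded. Taking expectations then gives the same identity.

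\textbf{Converse.} Let $\mu_t$ be a narrowly continuous probability-valued weak solution with $\mu_0$ given, and set $\nu_t=\mu_0P_t$. The plan is to show that the two curves have equal Laplace transforms. Fix $\alpha>0$ and $\psi\in C_0(\X)$, and let $R_\alpha\psi=\int_0^\infty e^{-\alpha t}P_t\psi\dd t$, which satisfies $(\alpha-\cA)R_\alpha\psi=\psi$.

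1. Multiply \eqref{eq:weak_fpk} by $\alpha e^{-\alpha t}$ and integrate over $t\in(0,\infty)$. After integrating by parts, every $\varphi\in\mathcal D_N$ satisfies
\begin{equation*}
 \int_0^\infty e^{-\alpha t}\langle\mu_t,(\alpha-\cA)\varphi\rangle\dd t=\langle\mu_0,\varphi\rangle.
\end{equation*}

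2. Extend this identity from $\mathcal D_N$ to $\varphi=R_\alpha\psi$. This is the main obstacle, because $R_\alpha\psi$ lies in the generator domain but need not lie in $\mathcal D_N$. The core property is used here: there exist $\varphi_k\in\mathcal D_N$ with $\varphi_k\to R_\alpha\psi$ and $\cA\varphi_k\to\cA R_\alpha\psi$ uniformly. Then $(\alpha-\cA)\varphi_k\to\psi$ uniformly. Since each $\mu_t$ has unit mass, dominated convergence passes to the limit on both sides.

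3. The extended identity reads
\begin{equation*}
 \int_0^\infty e^{-\alpha t}\langle\mu_t,\psi\rangle\dd t=\langle\mu_0,R_\alpha\psi\rangle.
\end{equation*}
The curve $\nu_t$ satisfies the same identity by the forward direction. Hence the Laplace transforms of $t\mapsto\langle\mu_t,\psi\rangle$ and $t\mapsto\langle\nu_t,\psi\rangle$ agree for every $\alpha>0$.

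4. Both functions are continuous and bounded in $t$. Uniqueness of the Laplace transform then gives $\langle\mu_t,\psi\rangle=\langle\nu_t,\psi\rangle$ for every $t$.

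5. Functions in $C_0(\X)$ separate probability measures on $\X$, so $\mu_t=\mu_0P_t$.

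The delicate point is the core approximation in step 2. It is exactly what Assumption \ref{ass:wellposed} supplies, so I would invoke it directly rather than prove it.
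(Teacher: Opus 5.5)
Your proposal is correct and follows the paper's proof. The converse is the same resolvent/Laplace-transform uniqueness argument: your graph-norm core approximation of $R_\alpha\psi$ is equivalent to the paper's use of density of $(\alpha I-\cA)\mathcal D_N$ in $C_0(\X)$. The only differences are minor: you get the forward identity directly from $\mathcal D_N\subset D(\cA)$ instead of Dynkin's formula with localization, and you also check narrow continuity of $t\mapsto\mu_0P_t$, which the paper leaves implicit.
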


\begin{proof}
For $\varphi\in\mathcal D_N$, Dynkin's formula for the stopped rocess, followed by nonexplosion and dominated convergence, gives
\begin{equation*}
 P_t\varphi-\varphi=\int_0^tP_s\cA\varphi\dd s.
\end{equation*}
Integration against $\mu_0$ proves that $\mu_0P_t$ satisfies \eqref{eq:weak_fpk}. It remains to show that the weak identity does not admit a second probability-valued evolution.

Fix $\alpha>0$ and define the finite functional
\begin{equation*}
 \overline\mu_\alpha(\varphi)
 =\int_0^\infty e^{-\alpha t}\mu_t(\varphi)\dd t,
 \qquad \varphi\in C_0(\X).
\end{equation*}
For $\varphi\in\mathcal D_N$, substitute \eqref{eq:weak_fpk} into this integral. Since $\cA\varphi$ is bounded, Fubini's theorem gives
\begin{align*}
 \overline\mu_\alpha(\varphi)
 &=\frac{1}{\alpha}\mu_0(\varphi)
   +\int_0^\infty\!\int_s^\infty
     e^{-\alpha t}\dd t\,\mu_s(\cA\varphi)\dd s\\
 &=\frac{1}{\alpha}\mu_0(\varphi)
   +\frac{1}{\alpha}\overline\mu_\alpha(\cA\varphi).
\end{align*}
Consequently,
\begin{equation}
 \overline\mu_\alpha\big((\alpha I-\cA)\varphi\big)
 =\mu_0(\varphi),\qquad \varphi\in\mathcal D_N.
 \label{eq:resolvent_identity}
\end{equation}
The Laplace transform of $\mu_0P_t$ satisfies the same identity. Since $\cA$ generates a Feller semigroup, the resolvent $(\alpha I-\cA)^{-1}$ is defined on $C_0(\X)$ for every $\alpha>0$. Because $\mathcal D_N$ is a core for $\cA$, $(\alpha I-\cA)\mathcal D_N$ is dense in $C_0(\X)$. The two finite functionals therefore agree on a dense subset of $C_0(\X)$, and hence on all of $C_0(\X)$. This holds for every $\alpha>0$, so uniqueness of scalar Laplace transforms gives $\mu_t(\psi)=\mu_0P_t\psi$ for almost every $t$ and every $\psi\in C_0(\X)$. Both sides are continuous in $t$. Hence, equality holds for every $t\ge0$.  Finally, Radon probability measures are determined by $C_0(\X)$, and thus $\mu_t=\mu_0P_t$. The range and resolvent facts used in this last step are the standard forward-equation consequences of a well-posed martingale problem \cite{ethier1986markov}.
\end{proof}

Thus the weak solution class in \eqref{eq:weak_fpk} coincides with the semigroup evolution. On bounded domains, this correspondence is based on the reflected process and its Neumann generator.

\section{Mode-Weighted Hybrid Transport Discrepancy}
\label{sec:transport}

Let $\bm v=(v_1,\ldots,v_M)^\top\in\R_{>0}^M$. Let $B=[\beta_{ij}]$ be symmetric, with $\beta_{ii}=0$, $\beta_{ij}>0$ for $i\ne j$, and
\begin{equation}
 \beta_{ij}\le\beta_{ik}+\beta_{kj},\qquad i,j,k\in\M.
 \label{eq:graph_metric}
\end{equation}
Define $a_{ii}=v_i$ and $a_{ij}=\min\{v_i,v_j\}$ for $i\ne j$. For $z=(x,i)$ and $\tilde z=(y,j)$, set
\begin{equation}
 c_{\bm v,B}(z,\tilde z)
 =\inf_{q\in\Omega}\big[v_i\|x-q\|+\beta_{ij}+v_j\|q-y\|\big].
 \label{eq:path_cost}
\end{equation}

\begin{lemma}[Closed form of the cost]
\label{lem:cost_closed}
For all $x,y\in\Omega$ and $i,j\in\M$,
\begin{equation}
 c_{\bm v,B}((x,i),(y,j))=a_{ij}\|x-y\|+\beta_{ij}.
 \label{eq:closed_cost}
\end{equation}
\end{lemma}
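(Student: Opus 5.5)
The plan is to prove matching upper and lower bounds on the infimum in \eqref{eq:path_cost}. Throughout, fix $x,y\in\Omega$ and $i,j\in\M$, and note that the constant $\beta_{ij}$ comes out of the infimum. It therefore suffices to show
\begin{equation*}
 \inf_{q\in\Omega}\big[v_i\|x-q\|+v_j\|q-y\|\big]=a_{ij}\|x-y\|.
\end{equation*}
The case $i=j$ is the degenerate instance $v_i=v_j$: the triangle inequality gives $v_i(\|x-q\|+\|q-y\|)\ge v_i\|x-y\|$, and equality holds at $q=x$. For $i\ne j$ we may assume, by symmetry of the expression in $(x,i)\leftrightarrow(y,j)$ and of $B$, that $v_i\le v_j$, so $a_{ij}=v_i$.

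\textbf{Upper bound.} Choose $q=y\in\Omega$. The bracket then equals $v_i\|x-y\|=a_{ij}\|x-y\|$. The only point to check is that $y$ is admissible, which holds since $y\in\Omega$. No convexity of $\Omega$ is needed for this direction, because an endpoint of the segment is always admissible.

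\textbf{Lower bound.} For any $q\in\Omega$, use $v_j\ge v_i$ and the triangle inequality:
\begin{equation*}
 v_i\|x-q\|+v_j\|q-y\|\ge v_i\big(\|x-q\|+\|q-y\|\big)\ge v_i\|x-y\|.
\end{equation*}
Taking the infimum over $q$ gives the reverse inequality, so the infimum equals $\min\{v_i,v_j\}\|x-y\|$ and is attained at the endpoint carrying the larger weight. This yields \eqref{eq:closed_cost}.

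There is no real obstacle here. The only step needing care is the observation that the minimizer can be taken at an endpoint, namely $q=y$, or $q=x$ if $v_i\ge v_j$. This is exactly why $a_{ij}$ is a minimum rather than, for example, a harmonic-type combination of $v_i$ and $v_j$. It is also why the metric property \eqref{eq:graph_metric} of $B$ is not used in this lemma; that property presumably matters later, when one checks that $c_{\bm v,B}$ satisfies a triangle inequality across three modes.
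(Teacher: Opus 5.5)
Your proof is correct and is essentially the paper's argument: the lower bound comes from $v_i,v_j\ge a_{ij}$ plus the triangle inequality, and it is attained at $q=y$ when $v_i\le v_j$ (or $q=x$ otherwise); your separate treatment of $i=j$ and the symmetry reduction are harmless variations. Your closing aside about \eqref{eq:graph_metric} is off, though: right after this lemma the paper shows that $c_{\bm v,B}$ generally fails the triangle inequality when the $v_i$ are unequal, and it yields a genuine metric only in the equal-weight case.
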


\begin{proof}
Put $r=\|x-y\|$. Since $v_i\ge a_{ij}$ and $v_j\ge a_{ij}$, the triangle inequality gives, for every $q\in\Omega$,
\begin{align*}
 v_i\|x-q\|+v_j\|q-y\|
 &\ge a_{ij}\big(\|x-q\|+\|q-y\|\big)\\
 &\ge a_{ij}r.
\end{align*}
This proves the lower bound in \eqref{eq:closed_cost}. If $v_i\le v_j$, the admissible choice $q=y$ makes the path cost $v_ir+\beta_{ij}=a_{ij}r+\beta_{ij}$. If $v_j\le v_i$, the choice $q=x$ gives the same value with $v_j=a_{ij}$. The lower bound is therefore attained in both cases.
\end{proof}

For $\mu,\tilde\mu\in\Pone(\X)$, define
\begin{equation}
 \cW(\mu,\tilde\mu)
 =\inf_{\pi\in\Pi(\mu,\tilde\mu)}
 \int_{\X\times\X}c_{\bm v,B}(z,\tilde z)\pi(\dd z,\dd\tilde z).
 \label{eq:transport_discrepancy}
\end{equation}
The functional is symmetric, nonnegative, and separates probability measures. With unequal $v_i$, however, \eqref{eq:closed_cost} need not satisfy the triangle inequality. For example, take $v_1=2$, $v_2=1$, $\beta_{12}=b$, and suppose that $0,R,2R\in\Omega$ with $R>b$. For $z_0=(0,1)$, $z_1=(R,2)$, and $z_2=(2R,1)$,
\begin{equation*}
 c_{\bm v,B}(z_0,z_2)=4R
 >2R+2b=c_{\bm v,B}(z_0,z_1)+c_{\bm v,B}(z_1,z_2).
\end{equation*}
The same counterexample applied to Dirac measures rules out a metric on probability laws. We therefore call \eqref{eq:transport_discrepancy} a Wasserstein-type discrepancy rather than a metric. If all $v_i$ coincide, it is the ordinary $1$-Wasserstein distance induced by the product metric $v\|x-y\|+\beta_{ij}$.

Although the triangle inequality may fail, the discrepancy has the same convergence topology as a standard product-space Wasserstein distance. Let $d_0((x,i),(y,j))=\|x-y\|+\ind_{\{i\ne j\}}$ and denote its $1$-Wasserstein distance by $W_{d_0}$. For $M\ge2$, define
\begin{align*}
 \underline m&=\min\{\min_i v_i,\min_{i\ne j}\beta_{ij}\},\\
 \overline m&=\max\{\max_i v_i,\max_{i\ne j}\beta_{ij}\}.
\end{align*}
Then the pointwise bounds
\begin{equation}
 \underline m\,d_0(z,\tilde z)\le c_{\bm v,B}(z,\tilde z)
 \le\overline m\,d_0(z,\tilde z)
 \label{eq:cost_equivalence}
\end{equation}
imply
\begin{equation}
 \underline m W_{d_0}(\mu,\tilde\mu)
 \le\cW(\mu,\tilde\mu)
 \le\overline m W_{d_0}(\mu,\tilde\mu).
 \label{eq:transport_equivalence}
\end{equation}
Thus $\cW$ controls both the continuous first moment and the mismatch probability of the discrete modes. For completeness, the bounds also prove the separation assertion above. Indeed, if $i=j$, then $c_{\bm v,B}=v_i\|x-y\|$ and if $i\ne j$, both $a_{ij}\ge\underline m$ and $\beta_{ij}\ge\underline m$. This gives the left inequality in \eqref{eq:cost_equivalence} and the right one is obtained in the same two cases from $a_{ij},v_i,\beta_{ij}\le\overline m$. Integrating against an arbitrary plan and taking infima proves \eqref{eq:transport_equivalence}. Since $d_0$ is a complete product metric and $W_{d_0}$ separates probability measures in $\Pone(\X)$ \cite{villani2009optimal}, $\cW(\mu,\tilde\mu)=0$ implies $\mu=\tilde\mu$.

\begin{lemma}[Coupling reduction]
\label{lem:coupling_reduction}
Suppose there is a measurable Markov coupling kernel $Q_t((z,\tilde z),\cdot)$ of $P_t(z,\cdot)$ and $P_t(\tilde z,\cdot)$ satisfying
\begin{equation}
 \int c_{\bm v,B}(z',\tilde z')
 Q_t((z,\tilde z),\dd z'\dd\tilde z')
 \le e^{-\eta t}c_{\bm v,B}(z,\tilde z).
 \label{eq:point_coupling}
\end{equation}
Then $\cW(\mu P_t,\tilde\mu P_t)\le e^{-\eta t}\cW(\mu,\tilde\mu)$.
\end{lemma}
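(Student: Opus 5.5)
The plan is the standard gluing argument: compose an optimal (or near-optimal) plan for the initial laws with the coupling kernel $Q_t$, and integrate the pointwise bound \eqref{eq:point_coupling}. The only point needing care is that $\cW$ need not be a metric. The argument never uses the triangle inequality, however. It only uses that the infimum in \eqref{eq:transport_discrepancy} is taken over couplings.

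First, fix $\eps>0$. If $\cW(\mu,\tilde\mu)=\infty$ there is nothing to prove, so assume it is finite; by \eqref{eq:transport_equivalence} this holds for $\mu,\tilde\mu\in\Pone(\X)$. Choose $\pi\in\Pi(\mu,\tilde\mu)$ with
\begin{equation*}
 \int c_{\bm v,B}\dd\pi\le\cW(\mu,\tilde\mu)+\eps.
\end{equation*}
An optimal plan also exists, because $c_{\bm v,B}$ is continuous and nonnegative on the Polish space $\X\times\X$ \cite{villani2009optimal}. The $\eps$-version avoids needing that fact.

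Next, define the glued measure
\begin{equation*}
 \Gamma(\dd z'\dd\tilde z')=\int Q_t((z,\tilde z),\dd z'\dd\tilde z')\,\pi(\dd z,\dd\tilde z).
\end{equation*}
This is a well-defined probability measure on $\X\times\X$ because $Q_t$ is a measurable kernel. Its first marginal is $\int P_t(z,\cdot)\,\pi(\dd z,\dd\tilde z)=\int P_t(z,\cdot)\,\mu(\dd z)=\mu P_t$, using that the first marginal of $Q_t((z,\tilde z),\cdot)$ is $P_t(z,\cdot)$. In the same way, the second marginal is $\tilde\mu P_t$. Hence $\Gamma\in\Pi(\mu P_t,\tilde\mu P_t)$. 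The laws $\mu P_t$ and $\tilde\mu P_t$ belong to $\Pone(\X)$ by the bound below combined with \eqref{eq:transport_equivalence}, although this is not needed for the inequality itself.

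Finally, integrate \eqref{eq:point_coupling} against $\pi$. Tonelli's theorem applies because $c_{\bm v,B}\ge0$, and it gives
\begin{equation*}
 \cW(\mu P_t,\tilde\mu P_t)\le\int c_{\bm v,B}\dd\Gamma\le e^{-\eta t}\int c_{\bm v,B}\dd\pi\le e^{-\eta t}\big(\cW(\mu,\tilde\mu)+\eps\big).
\end{equation*}
Letting $\eps\downarrow0$ concludes the proof.

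The main point to check is the measurability needed to define $\Gamma$ and to apply Tonelli's theorem. This is exactly why the statement assumes $Q_t$ to be a measurable kernel in $(z,\tilde z)$. The other ingredient is the marginal identity, which is immediate from the coupling property. No triangle inequality is used, so the non-metric nature of $\cW$ causes no difficulty.
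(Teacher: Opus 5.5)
Your proof is correct and is the same gluing argument the paper uses; the paper fixes an arbitrary $\pi\in\Pi(\mu,\tilde\mu)$ and takes the infimum at the end instead of choosing an $\eps$-optimal plan, which is equivalent. One aside is not justified, though it is harmless since you flag it as unused: a finite bound on $\cW(\mu P_t,\tilde\mu P_t)$ together with \eqref{eq:transport_equivalence} controls only the discrepancy between the two evolved laws, not the first moment of either one, so $\mu P_t\in\Pone(\X)$ needs a separate moment estimate (trivial on compact $\Omega$).
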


\begin{proof}
Fix $\pi\in\Pi(\mu,\tilde\mu)$ and define
\begin{equation*}
 \Gamma_t(A)=\int_{\X\times\X}Q_t((z,\tilde z),A)
 \pi(\dd z\dd\tilde z).
\end{equation*}
The first marginal of $\Gamma_t$ is $\mu P_t$ and the second is $\tilde\mu P_t$, so $\Gamma_t$ is an admissible transport plan. Tonelli's theorem and \eqref{eq:point_coupling} yield
\begin{align*}
 \cW(\mu P_t,\tilde\mu P_t)
 &\le\int c_{\bm v,B}\dd\Gamma_t\\
 &\le e^{-\eta t}\int c_{\bm v,B}\dd\pi.
\end{align*}
Taking the infimum over $\pi$ proves the claim.  This is the usual composition, or gluing, of an initial transport plan with a Markov coupling kernel \cite{villani2009optimal} and no optimal initial plan is required.
\end{proof}

\section{Generator Certificate for Exponential Contraction}
\label{sec:certificate}

\subsection{Coefficient envelopes}

\begin{assumption}[Same-mode one-sided contraction]
\label{ass:drift}
For each $i\in\M$, there is $c_i\in\R$ such that
\begin{equation}
 \langle x-y,f_i(x)-f_i(y)\rangle\le-c_i\|x-y\|^2.
 \label{eq:same_drift}
\end{equation}
Negative $c_i$ are allowed and represent locally expansive modes.
\end{assumption}

\begin{assumption}[Rate bounds]
\label{ass:rates}
For $i\ne k$, define
\begin{equation}
 \underline\lambda_{ik}=\inf_{x\in\Omega}\lambda_{ik}(x),\quad
 \overline\lambda_{ik}=\sup_{x\in\Omega}\lambda_{ik}(x),
 \label{eq:rate_bounds}
\end{equation}
and assume $|\lambda_{ik}(x)-\lambda_{ik}(y)|\le L_{ik}\|x-y\|$. For $s\in\R$, write $s^+=\max\{s,0\}$ and $s^-=\min\{s,0\}$.
\end{assumption}

\begin{assumption}[Cross-mode drift envelope]
\label{ass:cross_drift}
For every $i\ne j$, constants $\kappa_{ij}\in\R$ and $h_{ij}\ge0$ satisfy
\begin{equation}
 \frac{\langle x-y,f_i(x)-f_j(y)\rangle}{\|x-y\|}
 \le\kappa_{ij}\|x-y\|+h_{ij},\qquad x\ne y.
 \label{eq:cross_envelope}
\end{equation}
At coincidence we require
\begin{equation}
 \|f_i(x)-f_j(x)\|\le h_{ij},\qquad x\in\Omega,\quad i\ne j,
 \label{eq:cross_envelope_diagonal}
\end{equation}
which is exactly the upper Dini bound for the separation starting from $x=y$.
\end{assumption}

The $h_{ij}$ term measures the mismatch of the two vector fields near coincident positions, whereas $\kappa_{ij}$ controls its growth with separation.  To see that finite envelopes exist on a compact domain, let $L_i^f$ be a Lipschitz constant of $f_i$ and set $h_{ij}=\sup_{z\in\Omega}\|f_i(z)-f_j(z)\|$. The decomposition
\begin{equation*}
 f_i(x)-f_j(y)=[f_i(x)-f_i(y)]+[f_i(y)-f_j(y)]
\end{equation*}
and Cauchy-Schwarz give \eqref{eq:cross_envelope} with $\kappa_{ij}=L_i^f$.  Sharper one-sided estimates may make $\kappa_{ij}$ smaller and loose global choices can make the certificate infeasible.

Define the same-mode margin
\begin{align}
 S_i(\bm v,B)=&-c_iv_i+\sum_{k\ne i}\big[(v_k-v_i)^+\overline\lambda_{ik}
 +(v_k-v_i)^-\underline\lambda_{ik}\big]\nonumber\\
 &+\sum_{k\ne i}L_{ik}\beta_{ik},
 \label{eq:same_margin}
\end{align}
and, for $i\ne j$, $r=\|x-y\|$,
\begin{align}
 C_{ij}(x,y;\bm v,B)=&a_{ij}(\kappa_{ij}r+h_{ij})\nonumber\\
 &+\sum_{k\ne i}\lambda_{ik}(x)
 [(a_{kj}-a_{ij})r+\beta_{kj}-\beta_{ij}]\nonumber\\
 &+\sum_{k\ne j}\lambda_{jk}(y)
 [(a_{ik}-a_{ij})r+\beta_{ik}-\beta_{ij}].
 \label{eq:cross_margin}
\end{align}

\subsection{Reflection geometry}

\begin{lemma}[Normal reflection is nonexpansive]
\label{lem:reflection}
Let $x,y\in\Omega$. If $x\in\partial\Omega$, then $\langle x-y,\boldsymbol n(x)\rangle\ge0$ and if $y\in\partial\Omega$, then $\langle x-y,\boldsymbol n(y)\rangle\le0$. Consequently, for two solutions of \eqref{eq:reflected_sde}, the reflection contribution to the upper Dini differential of $\|X_t-\tilde X_t\|$ is nonpositive.
\end{lemma}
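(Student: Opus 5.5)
The geometric inequalities follow from the supporting-hyperplane property of convex sets. Since $\Omega$ is convex with $C^2$ boundary, for $x\in\partial\Omega$ the outward normal $\boldsymbol n(x)$ defines a supporting hyperplane:
\[
\Omega\subset\{q:\langle q-x,\boldsymbol n(x)\rangle\le0\}.
\]
Taking $q=y\in\Omega$ gives $\langle y-x,\boldsymbol n(x)\rangle\le0$, which is the first claim. The second claim is the same statement with the roles of $x$ and $y$ exchanged, giving $\langle x-y,\boldsymbol n(y)\rangle\le0$. If the supporting-hyperplane fact needs justification, we would argue by contradiction. Suppose some $y\in\Omega$ had $\langle y-x,\boldsymbol n(x)\rangle>0$. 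By convexity the whole segment $[x,y]$ lies in $\Omega$. This segment leaves $x$ in a direction with a positive outward-normal component. For a $C^2$ boundary this forces points $x+t(y-x)$ with small $t>0$ to lie outside $\Omega$, because the local defining function $g$ (with $\nabla g(x)$ parallel to $\boldsymbol n(x)$) increases strictly along that direction. This is a contradiction.

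For the consequence, take two reflected solutions $X_t,\tilde X_t$ driven by \eqref{eq:reflected_sde} with local times $K_t,\tilde K_t$, and set $D_t=X_t-\tilde X_t$. We would work with $\|D_t\|^2$ first, since it is $C^2$. By It\^o's formula, the finite-variation reflection part of $\dd\|D_t\|^2$ is
\[
-2\langle D_t,\boldsymbol n(X_t)\rangle\dd K_t+2\langle D_t,\boldsymbol n(\tilde X_t)\rangle\dd\tilde K_t .
\]
By the support property \eqref{eq:local_time_support}, the measure $\dd K_t$ charges only times with $X_t\in\partial\Omega$. At those times the first part of the lemma gives $\langle D_t,\boldsymbol n(X_t)\rangle\ge0$, so the first term is a nonpositive measure. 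Similarly, $\dd\tilde K_t$ charges only times with $\tilde X_t\in\partial\Omega$, where $\langle D_t,\boldsymbol n(\tilde X_t)\rangle\le0$, so the second term is nonpositive too.

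To pass to $\|D_t\|$ we would use one of two routes. The first is to apply It\^o to $\sqrt{\|D_t\|^2+\epsilon}$, which multiplies the reflection terms by the positive factor $(\|D_t\|^2+\epsilon)^{-1/2}$. The second is simply to note that, under the synchronous coupling used later, the Brownian parts cancel, so $D_t$ has finite variation. Then $\|D_t\|$ is differentiable away from zero with reflection contribution $-\langle D_t/\|D_t\|,\boldsymbol n(X_t)\rangle\dd K_t+\langle D_t/\|D_t\|,\boldsymbol n(\tilde X_t)\rangle\dd\tilde K_t\le0$. At $D_t=0$, the upper Dini derivative of $\|D\|$ in any direction is at most the norm of the increment, and the reflection increment there is handled by the same sign argument or vanishes in measure.

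The main obstacle is the precise meaning of "upper Dini differential" at coincidence and at times in the support of $\dd K$. I would resolve it by stating the claim as an inequality between measures, namely that the reflection part of $\dd\|D_t\|$ is $\le0$, and deriving it from the $\epsilon$-regularized It\^o computation with $\epsilon\downarrow0$.
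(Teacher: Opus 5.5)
Your proposal is correct and follows essentially the same route as the paper. Both sign inequalities come from the supporting half-space $\langle z-x,\boldsymbol n(x)\rangle\le0$ for $z\in\Omega$. The consequence comes from the $\varepsilon$-regularized chain rule for $(\|\Delta_t\|^2+\varepsilon)^{1/2}$: the support property \eqref{eq:local_time_support} and those signs make both local-time terms nonpositive, and then $\varepsilon\downarrow0$. Your added justification of the half-space through a local defining function, and your preliminary computation of $\dd\|D_t\|^2$ (which does not even need synchronous driving), are harmless refinements of the same argument.
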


\begin{proof}
At a boundary point $x$, convexity places the whole set in the supporting half-space
\begin{equation*}
 \langle z-x,\boldsymbol n(x)\rangle\le0,
 \qquad z\in\Omega.
\end{equation*}
Taking $z=y$ gives $\langle x-y,\boldsymbol n(x)\rangle\ge0$. Applying the same argument at $y$ with $z=x$ gives $\langle x-y,\boldsymbol n(y)\rangle\le0$.

Now drive the two reflected equations synchronously and put $\Delta_t=X_t-\tilde X_t$ and $r_\eps(\Delta)=(\|\Delta\|^2+\eps^2)^{1/2}$. The Brownian terms cancel in $\Delta_t$. The ordinary chain rule for the remaining continuous finite-variation part shows that the two reflection terms in $\dd r_\eps(\Delta_t)$ are
\begin{align*}
 &-\frac{\langle\Delta_t,\boldsymbol n(X_t)\rangle}
 {r_\eps(\Delta_t)}\dd K_t\\
 &\quad+\frac{\langle\Delta_t,\boldsymbol n(\tilde X_t)\rangle}
 {r_\eps(\Delta_t)}\dd\tilde K_t.
\end{align*}
By \eqref{eq:local_time_support}, the first measure is carried by $X_t\in\partial\Omega$ and the second by $\tilde X_t\in\partial\Omega$. The two supporting-half-space inequalities therefore make each displayed term nonpositive. Letting $\eps\downarrow0$ proves the assertion both away from and at coincidence. The regularization is the same one used in the classical Skorokhod analysis of convex reflected diffusions \cite{tanaka1979stochastic,lions1984stochastic}.
\end{proof}

\subsection{Main result}

\begin{thm}[Pairwise contraction]
\label{thm:contraction}
Suppose Assumptions \ref{ass:wellposed}-\ref{ass:cross_drift} hold. If there exist $\bm v>0$, a graph cost $B$, and $\eta>0$ such that
\begin{align}
 S_i(\bm v,B)&\le-\eta v_i, &&i\in\M, \label{eq:certificate_same}\\
 C_{ij}(x,y;\bm v,B)&\le-\eta(a_{ij}\|x-y\|+\beta_{ij}),
 &&i\ne j,\ x,y\in\Omega, \label{eq:certificate_cross}
\end{align}
then, for all $\mu_0,\tilde\mu_0\in\Pone(\X)$,
\begin{equation}
 \cW(\mu_0P_t,\tilde\mu_0P_t)
 \le e^{-\eta t}\cW(\mu_0,\tilde\mu_0).
 \label{eq:semigroup_contraction}
\end{equation}
By Proposition \ref{prop:semigroup_fpk}, the same estimate holds for any two weak measure solutions of \eqref{eq:fpk}-\eqref{eq:no_flux}.
\end{thm}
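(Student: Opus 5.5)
By Lemma \ref{lem:coupling_reduction}, it suffices to build a Markov coupling kernel $Q_t$ of $P_t(z,\cdot)$ and $P_t(\tilde z,\cdot)$ that satisfies the pointwise bound \eqref{eq:point_coupling}. The estimate for weak measure solutions then follows from Proposition \ref{prop:semigroup_fpk}. The coupling will be the one named in the introduction. Both copies are driven by the same Brownian motion, each with its own local time $K,\tilde K$, so the noise cancels in $\Delta_t=X_t-\tilde X_t$. The mode clocks are coupled maximally target by target. If $\sigma_t=\tilde\sigma_t=i$, a simultaneous jump to $k$ occurs at rate $\min\{\lambda_{ik}(X_t),\lambda_{ik}(\tilde X_t)\}$, and only the copy with the larger rate jumps alone, at rate $|\lambda_{ik}(X_t)-\lambda_{ik}(\tilde X_t)|\le L_{ik}\|\Delta_t\|$. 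If $\sigma_t=i\ne j=\tilde\sigma_t$, the two copies jump independently at rates $\lambda_{ik}(X_t)$ and $\lambda_{jk}(\tilde X_t)$. This coupling is a bounded-rate interlacing of the synchronously coupled reflected diffusions, so it exists and is Markov, and its marginals are correct because each coordinate sees its own rates.

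Set $\Phi_t=c_{\bm v,B}(Z_t,\tilde Z_t)=a_{\sigma_t\tilde\sigma_t}\|\Delta_t\|+\beta_{\sigma_t\tilde\sigma_t}$, using Lemma \ref{lem:cost_closed}. The goal is to show that $e^{\eta t}\Phi_t$ is a supermartingale, i.e.\ that the drift of $\Phi_t$ is at most $-\eta\Phi_t$. Between jumps $\Delta_t$ has finite variation. I would compute the upper Dini derivative of $\|\Delta_t\|$ through the regularisation $r_\eps$, as in Lemma \ref{lem:reflection}, which disposes of the reflection terms because they are nonpositive. The drift part is then bounded by \eqref{eq:same_drift} in same-mode configurations. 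In cross-mode configurations I would use \eqref{eq:cross_envelope}, and at coincidence \eqref{eq:cross_envelope_diagonal}.

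\emph{Same-mode case} ($i=j$, $\Phi=v_ir$). The drift contributes at most $-c_iv_ir$. A simultaneous jump to $k$ changes the cost by $(v_k-v_i)r$. Since $\min\{\lambda_{ik}(x),\lambda_{ik}(y)\}$ lies in $[\underline\lambda_{ik},\overline\lambda_{ik}]$, that contribution is at most $(v_k-v_i)^+\overline\lambda_{ik}r+(v_k-v_i)^-\underline\lambda_{ik}r$. An unmatched jump produces the pair $(k,i)$ at cost $a_{ki}r+\beta_{ik}$, so the cost change is $(a_{ik}-v_i)r+\beta_{ik}$, where $a_{ik}-v_i\le0$. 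This is multiplied by a rate at most $L_{ik}r$. The term $(a_{ik}-v_i)L_{ik}r^2$ is nonpositive and is dropped, which leaves $L_{ik}\beta_{ik}r$. Summing over $k\ne i$ gives a total drift of at most $S_i(\bm v,B)r$, and \eqref{eq:certificate_same} bounds this by $-\eta v_ir=-\eta\Phi$.

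\emph{Cross-mode case.} The continuous part contributes at most $a_{ij}(\kappa_{ij}r+h_{ij})$. A jump of the first copy to $k$ gives the cost change $(a_{kj}-a_{ij})r+\beta_{kj}-\beta_{ij}$; if $k=j$, this correctly equals $-\Phi$, since $a_{jj}=v_j$ is multiplied by $r$... here I must be careful. Landing in mode $j$ makes the cost $v_jr$, not $0$, so the formula with $a_{jj}=v_j$ and $\beta_{jj}=0$ is consistent. Jumps of the second copy are handled in the same way. The generator applied to $\Phi$ is then exactly $C_{ij}(x,y)$, and \eqref{eq:certificate_cross} gives at most $-\eta\Phi$. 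Coincident positions with different modes are covered because the Dini bound uses $h_{ij}$ there.

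Combining the two cases, the semimartingale decomposition of $e^{\eta t}\Phi_t$ has a nonincreasing finite-variation part plus a local martingale coming from the compensated jumps. The jump rates are bounded and $\Phi$ has linear growth with finite first moments. Using localisation and Fatou's lemma, I would conclude $\mathbb E\Phi_t\le e^{-\eta t}\Phi_0$, which is \eqref{eq:point_coupling}.

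The main obstacle I expect is the non-smoothness of $r=\|\Delta\|$ at $\Delta=0$, together with the change of the coefficient $a$ at jumps. I would first work with $r_\eps$ and a stopping sequence and prove the drift inequality up to an $O(\eps)$ error. At coincidence in the same mode, $\Delta$ stays at $0$ under synchronous coupling, since the drifts agree and unmatched clocks have rate zero. So only the cross-mode diagonal needs \eqref{eq:cross_envelope_diagonal}. Letting $\eps\downarrow0$ by monotone convergence should then close the argument.
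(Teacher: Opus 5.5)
Your proposal is correct and is essentially the paper's own proof. It uses the same synchronous Brownian coupling with maximal coupling of equal-target clocks and independent cross-mode clocks, the same same-mode and cross-mode generator bounds yielding $S_i(\bm v,B)\,r$ and $C_{ij}$, the same $r_\eps$ regularization with nonpositive reflection terms, and the same localization--Fatou step followed by Lemma~\ref{lem:coupling_reduction} and Proposition~\ref{prop:semigroup_fpk}.

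Two wording fixes. For an unmatched clock with rate $p\le L_{ik}r$, write the contribution as $p(a_{ik}-v_i)r+p\beta_{ik}\le L_{ik}\beta_{ik}r$, rather than substituting $L_{ik}r$ into the nonpositive term. In the cross-mode case, the generator of $\Phi$ is bounded by $C_{ij}$, not equal to it, since the drift part is only an envelope.
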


\begin{proof}
We give the construction and the generator calculation explicitly.

\emph{Step 1: Markovian coupling and its marginals.}
On a state $(x,i,y,j)\in\X^2$, drive both continuous components by the same Brownian motion:
\begin{align*}
 \dd X_t&=f_i(X_t)\dd t+\sqrt{2\nu}\dd W_t
 -\boldsymbol n(X_t)\dd K_t,\\
 \dd\tilde X_t&=f_j(\tilde X_t)\dd t+\sqrt{2\nu}\dd W_t
 -\boldsymbol n(\tilde X_t)\dd\tilde K_t
\end{align*}
between successive mode jumps. If $i=j$, then for every $k\ne i$ introduce three clocks with rates
\begin{align}
 m_{ik}(x,y)&=\min\{\lambda_{ik}(x),\lambda_{ik}(y)\},\notag\\
 p_{ik}(x,y)&=[\lambda_{ik}(x)-\lambda_{ik}(y)]^+,\notag\\
 \tilde p_{ik}(x,y)&=[\lambda_{ik}(y)-\lambda_{ik}(x)]^+.
 \label{eq:coupled_clock_rates}
\end{align}
They produce, respectively, $(i,i)\to(k,k)$, $(i,i)\to(k,i)$, and $(i,i)\to(i,k)$. Since
\begin{equation*}
 m_{ik}+p_{ik}=\lambda_{ik}(x),\qquad
 m_{ik}+\tilde p_{ik}=\lambda_{ik}(y),
\end{equation*}
the first and second marginal clocks have exactly their prescribed rates. If $i\ne j$, use the transitions
\begin{align}
 (i,j)&\to(k,j) &&\text{at rate }\lambda_{ik}(x),\quad k\ne i,
 \label{eq:cross_first_clock}\\
 (i,j)&\to(i,k) &&\text{at rate }\lambda_{jk}(y),\quad k\ne j.
 \label{eq:cross_second_clock}
\end{align}
Thus the unequal-mode clocks are conditionally independent to first order in time.  If one of these jumps makes the modes equal, the rule \eqref{eq:coupled_clock_rates} is used thereafter. Boundedness of all rates permits the standard interlacing construction and rules out an accumulation of jumps \cite{nguyen2017properties}. The resulting transition kernel $Q_t$ is measurable and Markovian, and the preceding rate identities show directly that its marginals are both $P_t$.

\emph{Step 2: continuous separation and reflection.}
Write
\begin{align*}
 \Delta_t&=X_t-\tilde X_t, & R_t&=\|\Delta_t\|,\\
 D_t&=a_{\sigma_t\tilde\sigma_t}R_t+
 \beta_{\sigma_t\tilde\sigma_t}.&&
\end{align*}
Between jumps the common Brownian terms cancel, so $\Delta_t$ has the continuous finite-variation differential
\begin{align}
 \dd\Delta_t={}&[f_i(X_t)-f_j(\tilde X_t)]\dd t
 -\boldsymbol n(X_t)\dd K_t
 +\boldsymbol n(\tilde X_t)\dd\tilde K_t.
 \label{eq:delta_differential}
\end{align}
For $R_t>0$, the chain rule applied to \eqref{eq:delta_differential} gives
\begin{align}
 \dd R_t={}&\frac{\langle\Delta_t,
 f_i(X_t)-f_j(\tilde X_t)\rangle}{R_t}\dd t\notag\\
 &-\frac{\langle\Delta_t,\boldsymbol n(X_t)\rangle}{R_t}\dd K_t
 +\frac{\langle\Delta_t,\boldsymbol n(\tilde X_t)\rangle}{R_t}
 \dd\tilde K_t.
 \label{eq:distance_chain_rule}
\end{align}
Lemma \ref{lem:reflection} makes the last two terms nonpositive. At $R_t=0$, apply the same calculation first to $R_{t,\eps}=(R_t^2+\eps^2)^{1/2}$ and then let $\eps\downarrow0$. Since \eqref{eq:delta_differential} has no quadratic-variation term, this approximation introduces no It\^o correction. It gives precisely the upper Dini derivative stipulated in Assumption \ref{ass:cross_drift}.

\emph{Step 3: same-mode generator bound.}
Let $i=j$. Then $D_t=v_iR_t$. From \eqref{eq:distance_chain_rule}, Assumption \ref{ass:drift}, and the reflection sign,
\begin{align}
 \cL_{\rm cont}^{\rm c}D_t
 &\le \frac{v_i}{R_t}\langle\Delta_t,
 f_i(X_t)-f_i(\tilde X_t)\rangle\notag\\
 &\le-c_iv_iR_t.
 \label{eq:proof_same_cont}
\end{align}
For a simultaneous jump to $k$, the positions do not reset, and the cost changes from $v_iR_t$ to $v_kR_t$. Its exact generator contribution is
\begin{equation}
 m_{ik}(X_t,\tilde X_t)(v_k-v_i)R_t.
 \label{eq:simultaneous_exact}
\end{equation}
If $v_k-v_i\ge0$, use $m_{ik}\le\overline\lambda_{ik}$ in \eqref{eq:simultaneous_exact}.  If $v_k-v_i<0$, use $m_{ik}\ge\underline\lambda_{ik}$, the inequality reverses when the negative increment is multiplied. Both cases are summarized by
\begin{align}
 m_{ik}(X_t,\tilde X_t)(v_k-v_i)R_t
 \le{}&(v_k-v_i)^+\overline\lambda_{ik}R_t\notag\\
 &+(v_k-v_i)^-\underline\lambda_{ik}R_t.
 \label{eq:proof_simultaneous}
\end{align}
If only the first component jumps, the exact increment is
\begin{align*}
 &c_{\bm v,B}((X_t,k),(\tilde X_t,i))-v_iR_t\\
 &\qquad=\beta_{ki}+(a_{ki}-v_i)R_t\le\beta_{ik},
\end{align*}
where symmetry of $B$ and $a_{ki}=\min\{v_k,v_i\}\le v_i$ were used. The second-only increment obeys the same bound. Moreover,
\begin{align*}
 p_{ik}+\tilde p_{ik}
 &=|\lambda_{ik}(X_t)-\lambda_{ik}(\tilde X_t)|\\
 &\le L_{ik}R_t.
\end{align*}
The two unmatched clocks therefore contribute at most $L_{ik}\beta_{ik}R_t$. Summing the continuous, simultaneous, and unmatched terms over $k\ne i$ gives
\begin{equation}
 \cL^{\rm c}D_t\le S_i(\bm v,B)R_t.
 \label{eq:same_generator_complete}
\end{equation}
Condition \eqref{eq:certificate_same} and $D_t=v_iR_t$ now yield
\begin{equation}
 \cL^{\rm c}D_t\le-\eta v_iR_t=-\eta D_t.
 \label{eq:proof_same_final}
\end{equation}
When $R_t=0$, the same-mode drift difference and the unmatched rates are zero, while a simultaneous jump leaves the cost zero. Hence \eqref{eq:proof_same_final} also holds at coincidence.

\emph{Step 4: cross-mode generator bound.}
Let $i\ne j$. The current cost is $D_t=a_{ij}R_t+\beta_{ij}$. The constant $\beta_{ij}$ is unaffected by the continuous motion, and \eqref{eq:distance_chain_rule} together with Assumption \ref{ass:cross_drift} gives
\begin{equation}
 \cL_{\rm cont}^{\rm c}D_t
 \le a_{ij}(\kappa_{ij}R_t+h_{ij}).
 \label{eq:proof_cross_cont}
\end{equation}
Under \eqref{eq:cross_first_clock}, a first-component jump $i\to k$ changes the cost by the exact amount
\begin{align}
 &c_{\bm v,B}((X_t,k),(\tilde X_t,j))-D_t\notag\\
 &\qquad=(a_{kj}-a_{ij})R_t+\beta_{kj}-\beta_{ij}.
 \label{eq:first_cross_increment}
\end{align}
Under \eqref{eq:cross_second_clock}, a second-component jump $j\to k$ changes it by
\begin{align}
 &c_{\bm v,B}((X_t,i),(\tilde X_t,k))-D_t\notag\\
 &\qquad=(a_{ik}-a_{ij})R_t+\beta_{ik}-\beta_{ij}.
 \label{eq:second_cross_increment}
\end{align}
Multiplying \eqref{eq:first_cross_increment} and \eqref{eq:second_cross_increment} by their respective rates and summing gives the two sums in \eqref{eq:cross_margin}. Adding
\eqref{eq:proof_cross_cont} therefore yields the generator bound
\begin{equation}
 \cL^{\rm c}D_t\le C_{ij}(X_t,\tilde X_t;\bm v,B).
 \label{eq:cross_generator_complete}
\end{equation}
Condition \eqref{eq:certificate_cross} yields
\begin{equation}
 \cL^{\rm c}D_t\le-\eta(a_{ij}R_t+\beta_{ij})=-\eta D_t.
 \label{eq:proof_cross_final}
\end{equation}
At $R_t=0$, \eqref{eq:proof_cross_cont} is read as the upper Dini bound $a_{ij}h_{ij}$, while the jump increments remain the exact increments in \eqref{eq:first_cross_increment}-\eqref{eq:second_cross_increment}, hence the same conclusion holds.

\emph{Step 5: localization and passage to laws.}
Steps 3 and 4 establish the extended-generator inequality
\begin{equation}
 (\cL^{\rm c}+\eta)D\le0
 \quad\text{on }\X^2.
 \label{eq:extended_generator_ineq}
\end{equation}
On $\R^n$, set
$\tau_N=\inf\{t:\|X_t\|+\|\tilde X_t\|\ge N\}\wedge N$ and on compact $\Omega$, only the time cutoff is needed.  Apply Dynkin's formula to the regularized cost and the stopped process, pass $\eps\downarrow0$ using the calculation in Step 2, and multiply by $e^{\eta t}$.  Equation \eqref{eq:extended_generator_ineq} gives
\begin{align}
 \mathbb E_{z,\tilde z}
  [e^{\eta(t\wedge\tau_N)}D_{t\wedge\tau_N}]
 &\le D_0.
 \label{eq:stopped_supermartingale}
\end{align}
Linear growth of the drifts, bounded jump rates, and the standard first-moment estimate imply nonexplosion, so $\tau_N\uparrow\infty$ almost surely. Fatou's lemma applied to \eqref{eq:stopped_supermartingale} yields
\begin{equation}
 \int c_{\bm v,B}(z',\tilde z')
 Q_t((z,\tilde z),\dd z'\dd\tilde z')
 \le e^{-\eta t}c_{\bm v,B}(z,\tilde z).
 \label{eq:pointwise_final_bound}
\end{equation}
Lemma \ref{lem:coupling_reduction}, applied to the measurable kernel $Q_t$, proves \eqref{eq:semigroup_contraction}. Finally, Proposition \ref{prop:semigroup_fpk} identifies every weak measure solution with its semigroup evolution, which proves the FPK statement.
\end{proof}

\begin{corollary}[Invariant law]
\label{cor:invariant}
If $\Omega$ is compact, the semigroup admits a unique invariant probability measure $\mu_\star$, and
\begin{equation}
 \cW(\mu_0P_t,\mu_\star)\le e^{-\eta t}\cW(\mu_0,\mu_\star).
\end{equation}
On $\R^n$, the same conclusion holds whenever existence of an invariant measure is ensured by a separate tightness or Lyapunov condition.
\end{corollary}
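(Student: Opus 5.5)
The plan is to combine the contraction of Theorem \ref{thm:contraction} with a Cauchy-type (or Banach fixed-point) argument in the complete space $(\Pone(\X),W_{d_0})$, using the equivalence \eqref{eq:transport_equivalence} to pass between $\cW$ and a genuine metric. The lack of a triangle inequality for $\cW$ is the main obstacle, and \eqref{eq:transport_equivalence} handles it: every estimate is proved for $\cW$ and then transferred to $W_{d_0}$ at the cost of a factor $\overline m/\underline m$.

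\emph{Existence.} Fix $\mu_0\in\Pone(\X)$; on compact $\Omega$ every probability measure has finite first moment. Choose $s>0$ large enough that $q:=(\overline m/\underline m)e^{-\eta s}<1$. Applying Theorem \ref{thm:contraction} to the pair $(\mu_0,\mu_0P_s)$ and using \eqref{eq:transport_equivalence} gives
\begin{equation*}
 W_{d_0}(\mu_0P_{(n+1)s},\mu_0P_{ns})
 \le \tfrac{1}{\underline m}\,\cW(\mu_0P_{(n+1)s},\mu_0P_{ns})
 \le \tfrac{\overline m}{\underline m}e^{-\eta ns}\,W_{d_0}(\mu_0P_s,\mu_0).
\end{equation*}
Because $W_{d_0}$ is a metric, these bounds are summable, so $(\mu_0P_{ns})_n$ is Cauchy. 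Completeness of $(\Pone(\X),W_{d_0})$ then gives a limit $\mu_\star$. On compact $\Omega$ one could alternatively extract a narrowly convergent subsequence, since $\X$ is compact.

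\emph{Invariance.} To show $\mu_\star P_t=\mu_\star$ for all $t$, I would combine the bound
\begin{equation*}
 W_{d_0}(\mu P_t,\nu P_t)\le \tfrac{\overline m}{\underline m}e^{-\eta t}\,W_{d_0}(\mu,\nu)
\end{equation*}
(so each $P_t$ is $W_{d_0}$-Lipschitz) with $\mu_0P_{ns}\to\mu_\star$. This yields $\mu_\star P_s=\mu_\star$. For a general $t$, the same argument applied to $\mu_\star P_t$ shows $W_{d_0}(\mu_\star P_t P_{ns},\mu_\star P_t)\to0$. Since $\mu_\star P_tP_{ns}=\mu_\star P_{ns}P_t=\mu_\star P_t$, the chain $\mu_\star P_t P_{ns}$ is constant, and its limit, which must equal the limit of $\mu_\star P_{ns}$, is $\mu_\star$; hence $\mu_\star P_t=\mu_\star$. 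Alternatively, Feller continuity together with narrow convergence gives $\mu_\star P_t=\lim\mu_0P_{ns+t}$, and that limit equals $\mu_\star$ because $W_{d_0}(\mu_0P_{ns+t},\mu_0P_{ns})\le (\overline m/\underline m)e^{-\eta ns}W_{d_0}(\mu_0P_t,\mu_0)\to0$.

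\emph{Uniqueness and rate.} If $\mu_\star$ and $\mu_\star'$ are both invariant, then $\cW(\mu_\star,\mu_\star')\le e^{-\eta t}\cW(\mu_\star,\mu_\star')$ for every $t>0$, so $\cW(\mu_\star,\mu_\star')=0$. Separation of $\cW$ then gives $\mu_\star=\mu_\star'$. The displayed rate follows by applying Theorem \ref{thm:contraction} with $\tilde\mu_0=\mu_\star$ and using $\mu_\star P_t=\mu_\star$.

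\emph{The case $\Omega=\R^n$.} Existence is assumed here, so the uniqueness and rate arguments carry over verbatim, provided $\mu_\star\in\Pone$. If $\mu_\star$ is not a priori in $\Pone$, I would obtain this from the Lyapunov condition or from the Cauchy construction above.
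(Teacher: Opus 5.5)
Your argument is correct, and your uniqueness and rate steps match the paper's; the two proofs differ in how existence is obtained. The paper uses Krylov--Bogoliubov: compactness of $\X$ makes the occupation measures $T^{-1}\int_0^TP_t(z,\cdot)\dd t$ tight, and the Feller property makes any narrow limit invariant. That argument never uses the certificate, which is why the paper needs a separate tightness or Lyapunov hypothesis on $\R^n$. You instead get existence from the contraction itself, as a Cauchy (Banach fixed-point) argument in the complete metric space $(\Pone(\X),W_{d_0})$. Here \eqref{eq:transport_equivalence} compensates for the missing triangle inequality of $\cW$, at the cost of the factor $\overline m/\underline m$.

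Your route gives more than you claim. Compactness enters only to ensure that $P_s$ maps $\Pone(\X)$ into itself. On $\R^n$ that follows from the linear-growth first-moment estimate already used in Step 5 of the proof of Theorem \ref{thm:contraction}. So your construction proves existence on $\R^n$ outright, with $\mu_\star\in\Pone(\X)$ automatically. This is exactly what the uniqueness step needs, since otherwise $\cW(\mu_\star,\mu_\star')$ may be infinite. The corollary's separate hypothesis on $\R^n$ is therefore dispensable, and you need not defer to a Lyapunov condition. If you also want uniqueness among invariant laws outside $\Pone(\X)$, add one line: $\delta_zP_t\to\mu_\star$ for each $z$, then integrate against a candidate invariant law and apply dominated convergence.

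Two small cleanups:
\begin{itemize}
\item The choice of $s$ with $(\overline m/\underline m)e^{-\eta s}<1$ is unnecessary, because your displayed bound applies the theorem directly at time $ns$.
\item In the invariance step, the quantity to send to zero is $W_{d_0}(\mu_\star P_tP_{ns},\mu_\star P_{ns})\le(\overline m/\underline m)e^{-\eta ns}W_{d_0}(\mu_\star P_t,\mu_\star)$. The expression you wrote, $W_{d_0}(\mu_\star P_tP_{ns},\mu_\star P_t)$, is identically zero. In your alternative version of that step, the $W_{d_0}$-Lipschitz bound on $P_t$ you already derived can replace Feller continuity, so the whole existence argument is Feller-free.
\end{itemize}
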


\begin{proof}
Fix $z\in\X$ and form the occupation measures $\vartheta_T=T^{-1}\int_0^T P_t(z,\cdot)\dd t$. Compactness of $\X$ makes this family tight. Any weakly convergent subsequence has a limit $\mu_\star$, and the Feller property gives $\mu_\star P_s=\mu_\star$ by the Krylov-Bogoliubov argument \cite{ethier1986markov}. If $\mu_\star$ and $\tilde\mu_\star$ are both invariant, Theorem \ref{thm:contraction} gives
\begin{equation*}
 \cW(\mu_\star,\tilde\mu_\star)
 \le e^{-\eta t}\cW(\mu_\star,\tilde\mu_\star),\qquad t>0.
\end{equation*}
Since the discrepancy separates measures, the two invariant laws are equal. Applying \eqref{eq:semigroup_contraction} with $\tilde\mu_0=\mu_\star$ gives the asserted rate.
\end{proof}

\subsection{Consistency checks and special cases}

The certificate contains several familiar limits. If $M=1$, there are no jump or cross-mode conditions and Theorem \ref{thm:contraction} reduces to the synchronous-coupling estimate
\begin{equation*}
 W_1(\mu P_t,\tilde\mu P_t)
 \le e^{-c_1t}W_1(\mu,\tilde\mu).
\end{equation*}
If the mode weights are all equal, $\cW$ is a genuine Wasserstein metric, but simultaneous mode jumps do not contribute to the same-mode margin. In particular,
\begin{equation*}
 S_i=-c_iv+\sum_{k\ne i}L_{ik}\beta_{ik}.
\end{equation*}
Thus an unequal weight vector is not a cosmetic generalization: it is what allows frequent jumps toward lower-cost modes to offset a negative $c_i$.

For a state-independent switching matrix $Q=[q_{ik}]$, where $q_{ik}\ge0$ for $k\ne i$ and $q_{ii}=-\sum_{k\ne i}q_{ik}$, the rate-Lipschitz penalties vanish and the same-mode conditions are exactly
\begin{equation}
 \big[Q\bm v-\operatorname{diag}(c_1,\ldots,c_M)\bm v\big]_i
 \le-\eta v_i,\qquad i\in\M.
 \label{eq:constant_rate_reduction}
\end{equation}
Equation \eqref{eq:constant_rate_reduction} is a positive-vector inequality for a Metzler matrix. State dependence changes this clean matrix test in two ways: unmatched clocks produce the $L_{ik}\beta_{ik}$ penalty, and cross-mode pairs must satisfy the spatially resolved condition \eqref{eq:certificate_cross}.

\section{Certificate Computation and Conservatism}
\label{sec:computation}

\subsection{Weight synthesis}

Conditions \eqref{eq:certificate_same}-\eqref{eq:certificate_cross} are homogeneous in $(\bm v,B)$, so the weights cannot be unique without normalization. We use
\begin{equation}
 \sum_{i=1}^M v_i=1,\qquad v_i\ge\eps_v,\qquad
 \beta_{ij}\ge\eps_\beta\quad(i\ne j),
 \label{eq:normalization}
\end{equation}
with small prescribed $\eps_v,\eps_\beta>0$.

\begin{proposition}[Fixed-order linearity]
\label{prop:fixed_order_lp}
Fix a total order of the weights, a decay rate $\eta$, and a collection of state pairs at which \eqref{eq:certificate_cross} is imposed. Under \eqref{eq:normalization}, the graph constraints \eqref{eq:graph_metric} and the sampled certificate conditions form a linear feasibility problem in $(\bm v,B)$.
\end{proposition}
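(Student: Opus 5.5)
The plan is to show that, once the ordering and $\eta$ are fixed, every nonlinear ingredient in \eqref{eq:same_margin}--\eqref{eq:cross_margin} becomes a fixed linear expression. The only nonlinearities in the certificate are the minima $a_{ij}=\min\{v_i,v_j\}$ and the truncations $(v_k-v_i)^\pm$; everything else is a product of a decision variable with a given coefficient. Fix a permutation $\pi$ with $v_{\pi(1)}\le\cdots\le v_{\pi(M)}$ and impose these $M-1$ inequalities as linear constraints. On this cone, $a_{ij}$ equals $v_i$ or $v_j$ according to which index comes first in $\pi$, so it is a fixed coordinate of $\bm v$. Likewise the sign of $v_k-v_i$ is known, so $(v_k-v_i)^+\overline\lambda_{ik}+(v_k-v_i)^-\underline\lambda_{ik}$ equals $(v_k-v_i)$ multiplied by either $\overline\lambda_{ik}$ or $\underline\lambda_{ik}$. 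When $v_k=v_i$ is allowed by a non-strict order, both choices give zero, so either selection is consistent.

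Next I would check each group of constraints in turn.
\begin{itemize}
\item The normalization \eqref{eq:normalization} consists of linear equalities and inequalities.
\item The graph-metric constraints \eqref{eq:graph_metric}, together with symmetry, $\beta_{ii}=0$ and $\beta_{ij}\ge\eps_\beta$, are linear in $B$.
\item For the same-mode condition \eqref{eq:certificate_same}, $S_i(\bm v,B)+\eta v_i\le0$ is a sum of terms of the form (known constant)$\times v_k$ or $L_{ik}\beta_{ik}$. The constants $c_i$, $\overline\lambda_{ik}$, $\underline\lambda_{ik}$, $L_{ik}$ and $\eta$ are data.
\item For each sampled pair $(x,y)$ and $i\ne j$, set $r=\|x-y\|$. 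The numbers $\lambda_{ik}(x)$, $\lambda_{jk}(y)$, $\kappa_{ij}$ and $h_{ij}$ are then fixed. Since each $a_{\cdot\cdot}$ is a selected coordinate of $\bm v$, the expression $C_{ij}(x,y;\bm v,B)+\eta(a_{ij}r+\beta_{ij})$ is a linear combination of the $v$'s and $\beta$'s.
\end{itemize}
Collecting these gives finitely many linear constraints for a finite sample, or a semi-infinite family indexed by $(x,y)$, in the variables $(\bm v,B)$.

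The only delicate point is making sure no hidden product of decision variables appears. Here $\eta$ must be fixed, since otherwise $\eta v_i$ and $\eta\beta_{ij}$ would be bilinear; the ordering must be fixed, since otherwise $\min$ and $(\cdot)^\pm$ would be only piecewise linear. I would state explicitly that $a_{kj}$, $a_{ik}$ and $a_{ij}$ in \eqref{eq:cross_margin} are all resolved by the same ordering, including the cases $k=j$ and $k=i$, where $a_{jj}=v_j$ and $\beta_{jj}=0$. With that remark the claim follows. I would also note that the union over all $M!$ orderings recovers the original nonconvex problem, which justifies enumerating orderings or using bisection on $\eta$.
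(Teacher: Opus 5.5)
Your proposal is correct and follows essentially the same route as the paper: fixing the order selects each $a_{ij}$ and each $(v_k-v_i)^\pm$ branch as an affine expression, fixing $\eta$ removes the bilinear terms, and at sampled pairs $r$, $\lambda_{ik}(x)$, $\lambda_{jk}(y)$ are data. You also impose the order inequalities explicitly as linear constraints and treat the $k=j$ and $k=i$ cases; the paper leaves both implicit, so these are small gains in rigor.
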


\begin{proof}
For every ordered pair $(i,k)$, the prescribed order fixes one of the two affine expressions
\begin{equation*}
 (v_k-v_i)^+\overline\lambda_{ik}
 +(v_k-v_i)^-\underline\lambda_{ik}
 =\begin{cases}
  (v_k-v_i)\overline\lambda_{ik},&v_k\ge v_i,\\
  (v_k-v_i)\underline\lambda_{ik},&v_k\le v_i.
 \end{cases}
\end{equation*}
The same order fixes $a_{ij}$ to either $v_i$ or $v_j$.  At a sampled pair $(x,y)$, the quantities $r$, $\lambda_{ik}(x)$, and $\lambda_{jk}(y)$ are data. Since $\eta$ is also fixed, $\eta v_i$ and $\eta a_{ij}$ are linear rather than bilinear terms. Thus \eqref{eq:certificate_same} and every sampled instance of \eqref{eq:certificate_cross} are affine in $(\bm v,B)$. The normalization, positivity bounds, symmetry, and triangle inequalities for $B$ are linear as well, completing the reduction.
\end{proof}

On a continuum, the problem is a semi-infinite linear program (LP). A finite grid alone gives only a sampled certificate. A certified separation oracle, based for example on interval bounds or deterministic global optimization, converts the procedure into a proof over $\Omega^2$. Algorithm \ref{alg:synthesis} states the resulting computation.

\begin{algorithm}[t]
\caption{Mode-weight and decay-rate synthesis}
\label{alg:synthesis}
\begin{algorithmic}[1]
\REQUIRE Envelopes, $\Omega$, and tolerances $\eps_v,\eps_\beta,\eps_\eta$
\FOR{each admissible total order of $\bm v$}
  \STATE Initialize a finite constraint set $\mathcal S\subset\Omega^2$
  \STATE Bracket the largest feasible $\eta$ and start bisection
  \WHILE{the bisection interval exceeds $\eps_\eta$}
    \STATE Solve the LP of Proposition \ref{prop:fixed_order_lp}
    \IF{the LP is feasible}
      \STATE Maximize the cross-mode violation over $\Omega^2$
      \IF{a certified violation is positive}
        \STATE Add its maximizer to $\mathcal S$ and resolve
      \ELSE
        \STATE Accept the current lower bisection bound
      \ENDIF
    \ELSE
      \STATE Reduce the upper bisection bound
    \ENDIF
  \ENDWHILE
\ENDFOR
\STATE Return the order and certificate with the largest verified $\eta$
\end{algorithmic}
\end{algorithm}

The separation step can itself be certified by a finite mesh. Let
$R_\Omega=\sup_{x,y\in\Omega}\|x-y\|$ and define
\begin{equation*}
 F_{ij}(x,y)=C_{ij}(x,y;\bm v,B)
 +\eta(a_{ij}\|x-y\|+\beta_{ij}).
\end{equation*}
For $k\ne i$, set
\begin{equation*}
 A^i_{k}=a_{kj}-a_{ij},\quad
 B^i_{k}=\beta_{kj}-\beta_{ij},
\end{equation*}
and define $A^j_k=a_{ik}-a_{ij}$ and $B^j_k=\beta_{ik}-\beta_{ij}$ analogously.

\begin{proposition}[Finite-mesh certificate]
\label{prop:mesh_certificate}
On compact $\Omega$, $F_{ij}$ is Lipschitz with respect to $\|(x,y)-(x',y')\|_\oplus=\|x-x'\|+\|y-y'\|$. One valid constant is
\begin{align}
 H_{ij}={}&a_{ij}(|\kappa_{ij}|+\eta)\notag\\
 &+\sum_{k\ne i}\!\left[
 L_{ik}(|A^i_k|R_\Omega+|B^i_k|)
 +\overline\lambda_{ik}|A^i_k|\right]\notag\\
 &+\sum_{k\ne j}\!\left[
 L_{jk}(|A^j_k|R_\Omega+|B^j_k|)
 +\overline\lambda_{jk}|A^j_k|\right].
 \label{eq:mesh_lipschitz}
\end{align}
If $\mathcal S\subset\Omega^2$ is a $\delta$-net in this product norm and
\begin{equation}
 F_{ij}(\hat x,\hat y)\le-H_{ij}\delta
 \quad\text{for every }(\hat x,\hat y)\in\mathcal S,
 \label{eq:mesh_test}
\end{equation}
then \eqref{eq:certificate_cross} holds on all of $\Omega^2$.
\end{proposition}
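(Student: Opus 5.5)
The plan is a direct Lipschitz estimate of $F_{ij}$, term by term, followed by the usual net argument. Fix $i\ne j$ and write $r=\|x-y\|$, $r'=\|x'-y'\|$. The reverse triangle inequality gives
\begin{equation*}
 |r-r'|\le\|x-x'\|+\|y-y'\|=\|(x,y)-(x',y')\|_\oplus,
\end{equation*}
so $r$ is $1$-Lipschitz in the product norm, including across the diagonal $x=y$. Using the notation $A^i_k,B^i_k,A^j_k,B^j_k$, I will rewrite $F_{ij}$ as
\begin{align*}
 F_{ij}(x,y)={}&a_{ij}(\kappa_{ij}+\eta)r+a_{ij}h_{ij}+\eta\beta_{ij}\\
 &+\sum_{k\ne i}\lambda_{ik}(x)\big(A^i_kr+B^i_k\big)
 +\sum_{k\ne j}\lambda_{jk}(y)\big(A^j_kr+B^j_k\big).
\end{align*}
The constant terms do not contribute. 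The first term has Lipschitz constant $a_{ij}|\kappa_{ij}+\eta|\le a_{ij}(|\kappa_{ij}|+\eta)$.

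For each product term I will use the standard splitting
\begin{align*}
 &\lambda_{ik}(x)(A r+B)-\lambda_{ik}(x')(A r'+B)\\
 &\quad=[\lambda_{ik}(x)-\lambda_{ik}(x')](A r+B)+\lambda_{ik}(x')A(r-r').
\end{align*}
I then bound the pieces with three facts:
\begin{itemize}
\item Assumption \ref{ass:rates} gives $|\lambda_{ik}(x)-\lambda_{ik}(x')|\le L_{ik}\|x-x'\|$.
\item Compactness gives $r\le R_\Omega$, so $|A r+B|\le|A|R_\Omega+|B|$.
\item The rates satisfy $0\le\lambda_{ik}(x')\le\overline\lambda_{ik}$.
\end{itemize}
Together with $|r-r'|\le\|(x,y)-(x',y')\|_\oplus$ and $\|x-x'\|\le\|(x,y)-(x',y')\|_\oplus$, this gives the constant $L_{ik}(|A^i_k|R_\Omega+|B^i_k|)+\overline\lambda_{ik}|A^i_k|$. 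The second-component sum is handled symmetrically, with $y$ in place of $x$. Summing all contributions yields exactly $H_{ij}$ in \eqref{eq:mesh_lipschitz}. This proves the Lipschitz claim.

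For the covering step, take an arbitrary $(x,y)\in\Omega^2$. Since $\mathcal S$ is a $\delta$-net, there is $(\hat x,\hat y)\in\mathcal S$ with $\|(x,y)-(\hat x,\hat y)\|_\oplus\le\delta$. Then
\begin{equation*}
 F_{ij}(x,y)\le F_{ij}(\hat x,\hat y)+H_{ij}\delta\le0
\end{equation*}
by \eqref{eq:mesh_test}. The inequality $F_{ij}\le0$ is precisely \eqref{eq:certificate_cross} at $(x,y)$. Repeating this for every ordered pair $i\ne j$ covers all cross-mode conditions.

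The only point needing care is the nonsmoothness of $r$ at coincidence. The certificate \eqref{eq:certificate_cross} is imposed there as well, with the envelope $h_{ij}$ already built into $C_{ij}$. Since the argument uses only Lipschitz continuity of $r$ and no differentiability, the diagonal is not an obstacle. I therefore expect the proof to be essentially routine bookkeeping.
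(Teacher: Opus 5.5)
Your proposal is correct and takes essentially the same route as the paper. You use the reverse triangle inequality to make $r=\|x-y\|$ $1$-Lipschitz in $\|\cdot\|_\oplus$, the add-and-subtract splitting $[\lambda_{ik}(x)-\lambda_{ik}(x')](Ar+B)+\lambda_{ik}(x')A(r-r')$ bounded via $L_{ik}$, $R_\Omega$ and $\overline\lambda_{ik}$ (with the constant terms contributing nothing), and then the standard $\delta$-net step $F_{ij}(z)\le F_{ij}(\hat z)+H_{ij}\delta\le0$.
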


\begin{proof}
Take $z=(x,y)$, $z'=(x',y')$, and put $d_\oplus(z,z')=\|x-x'\|+\|y-y'\|$. The reverse triangle inequality gives
\begin{align}
 |\|x-y\|-\|x'-y'\||
 &\le\|(x-y)-(x'-y')\|\notag\\
 &\le d_\oplus(z,z').
 \label{eq:mesh_distance_step}
\end{align}
Consequently, the sum of the cross-drift term and the decay term has Lipschitz constant at most $a_{ij}(|\kappa_{ij}|+\eta)$ and the constants $a_{ij}h_{ij}$ and $\eta\beta_{ij}$ make no contribution.

For a typical first-mode jump term $G(x,y)=\lambda_{ik}(x)(A r+B)$, add and subtract $\lambda_{ik}(x')(Ar+B)$ to obtain
\begin{align*}
 |G(x,y)-G(x',y')|
 &\le |\lambda_{ik}(x)-\lambda_{ik}(x')|\,|Ar+B|\\
 &\quad+\lambda_{ik}(x')|A|\,|r-r'|\\
 &\le L_{ik}\|x-x'\|(|A|R_\Omega+|B|)\\
 &\quad+\overline\lambda_{ik}|A|d_\oplus(z,z')\\
 &\le\Big\{L_{ik}(|A|R_\Omega+|B|)\notag\\
 &\hspace{3.4em}+\overline\lambda_{ik}|A|\Big\}d_\oplus(z,z').
\end{align*}
The second-mode terms obey the same estimate with $j,k$ and $y$ in place of $i,k$ and $x$. Summing all constants proves \eqref{eq:mesh_lipschitz}.

For an arbitrary $z\in\Omega^2$, choose $\hat z\in\mathcal S$ with $d_\oplus(z,\hat z)\le\delta$. The Lipschitz estimate and \eqref{eq:mesh_test} then give, step by step,
\begin{equation*}
 F_{ij}(z)\le F_{ij}(\hat z)+H_{ij}d_\oplus(z,\hat z)
 \le-H_{ij}\delta+H_{ij}\delta=0.
\end{equation*}
This is exactly \eqref{eq:certificate_cross}.
\end{proof}

An unbuffered grid check is diagnostic only. Condition \eqref{eq:mesh_test} adds the discretization margin needed for a global certificate.

For small $M$, enumerating weight orders is practical. For larger mode sets, the order selection can be embedded in a mixed-integer linear formulation or handled by branch and bound. Even after \eqref{eq:normalization}, the maximizing certificate need not be unique. A reproducible tie break is to fix $\eta$ within the requested tolerance of its optimum, minimize $\sum_{i<j}\beta_{ij}$, and then minimize the spread of the $v_i$ using auxiliary variables.

\subsection{How the envelopes interact}

Define the feasibility residuals
\begin{equation}
 M_i=S_i+\eta v_i,\qquad
 M_{ij}=C_{ij}+\eta(a_{ij}r+\beta_{ij}).
 \label{eq:residuals}
\end{equation}
The certificate requires all residuals to be nonpositive.

\begin{proposition}[Envelope sensitivity]
\label{prop:sensitivity}
For fixed $(\bm v,B,\eta)$ and a fixed weight order, the following changes in the residuals are exact:
\begin{align}
 c_i+\Delta c_i&:\quad \Delta M_i=-v_i\Delta c_i,
 \label{eq:sens_ci}\\
 h_{ij}+\Delta h_{ij}&:\quad
 \Delta M_{ij}=a_{ij}\Delta h_{ij},
 \label{eq:sens_h}\\
 \kappa_{ij}+\Delta\kappa_{ij}&:\quad
 \Delta M_{ij}=a_{ij}r\Delta\kappa_{ij},
 \label{eq:sens_kappa}\\
 L_{ik}+\Delta L_{ik}&:\quad
 \Delta M_i=\beta_{ik}\Delta L_{ik}.
 \label{eq:sens_L}
\end{align}
\end{proposition}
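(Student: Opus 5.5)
The plan is to read off the residuals in \eqref{eq:residuals} as explicit functions of the envelope constants, with $(\bm v,B,\eta)$, the weight order and the point $(x,y)$ held fixed. Once this is done, each perturbation only changes a term that is affine in one constant, so the change in the residual is the coefficient of that constant multiplied by the increment. The weight order matters only because it freezes the piecewise-affine parts $(v_k-v_i)^\pm$ and $a_{ij}=\min\{v_i,v_j\}$. These are functions of $(\bm v,B)$ alone, as in the proof of Proposition~\ref{prop:fixed_order_lp}, so they stay constant under changes of $c_i,h_{ij},\kappa_{ij},L_{ik}$.

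The first step is to write the same-mode residual from \eqref{eq:same_margin}:
\[
M_i=-c_iv_i+\sum_{k\ne i}\big[(v_k-v_i)^+\overline\lambda_{ik}+(v_k-v_i)^-\underline\lambda_{ik}\big]+\sum_{k\ne i}L_{ik}\beta_{ik}+\eta v_i .
\]
The constant $c_i$ appears only in $-c_iv_i$, which gives \eqref{eq:sens_ci}. The constant $L_{ik}$ appears only in $L_{ik}\beta_{ik}$, which gives \eqref{eq:sens_L}. The rate bounds $\underline\lambda_{ik},\overline\lambda_{ik}$ are not perturbed.

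The second step uses \eqref{eq:cross_margin} in the same way:
\[
M_{ij}=a_{ij}\kappa_{ij}r+a_{ij}h_{ij}+(\text{jump sums})+\eta(a_{ij}r+\beta_{ij}).
\]
Here the jump sums depend only on $\lambda(x)$, $\lambda(y)$, $\bm v$, $B$ and $r$. So $h_{ij}$ enters only through $a_{ij}h_{ij}$, which gives \eqref{eq:sens_h}, and $\kappa_{ij}$ enters only through $a_{ij}\kappa_{ij}r$, which gives \eqref{eq:sens_kappa}.

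Exactness follows because every dependence is linear, so there are no higher-order terms. There is no real obstacle. The only point that needs care is that a fixed weight order keeps $a_{ij}$ and the sign branches constant. I would also note that $L_{ik}$ enters only $S_i$: it does not appear in $C_{ij}$ itself, only in the mesh constant $H_{ij}$, which is not a residual.
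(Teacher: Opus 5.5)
Your proposal is correct and follows essentially the same route as the paper: a fixed weight order freezes the branches $(v_k-v_i)^\pm$ and every $a_{ij}$. Each envelope constant then enters exactly one affine term ($-c_iv_i$, $L_{ik}\beta_{ik}$, $a_{ij}h_{ij}$, $a_{ij}\kappa_{ij}r$), and the decay terms are unaffected, so the residual changes are exact.
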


\begin{proof}
Fixing the weight order keeps every selected branch in $(v_k-v_i)^\pm$ and every $a_{ij}$ unchanged. In
\eqref{eq:same_margin}, the coefficient $c_i$ occurs only in the term $-c_iv_i$, subtracting the original residual from the perturbed one therefore gives $-v_i\Delta c_i$. In \eqref{eq:cross_margin}, $h_{ij}$ and $\kappa_{ij}$ occur only through $a_{ij}(\kappa_{ij}r+h_{ij})$, so their increments are respectively $a_{ij}\Delta h_{ij}$ and $a_{ij}r\Delta\kappa_{ij}$. Finally, $L_{ik}$ occurs in \eqref{eq:same_margin} only as $L_{ik}\beta_{ik}$, which changes by $\beta_{ik}\Delta L_{ik}$. The decay terms in \eqref{eq:residuals} are fixed in all four perturbations, proving \eqref{eq:sens_ci}-\eqref{eq:sens_L}.
\end{proof}

Equations \eqref{eq:sens_ci}-\eqref{eq:sens_L} expose the main tradeoff. A larger graph cost $\beta_{ik}$ can create a stronger decrease when a cross-mode jump removes a mode mismatch, but the same $\beta_{ik}$ amplifies the unmatched-clock penalty $L_{ik}\beta_{ik}$ in \eqref{eq:same_margin}. More negative $c_i$ makes an expansive mode harder to stabilize, while larger $h_{ij}$ penalizes even coincident spatial states and larger $\kappa_{ij}$ is most damaging at the diameter of the domain. Local partitions or state-dependent envelopes reduce conservatism by replacing one global worst case with several smaller separation problems, at the cost of more constraints.

\section{Numerical Studies}
\label{sec:example}

\subsection{One-dimensional analytical certificate}

Consider $\Omega=[-1,1]$, normal reflection at the endpoints, and $\nu=1$. The drifts and switching rates are
\begin{align}
 f_1(x)&=0.5x,& f_2(x)&=-2x,\label{eq:example_drifts}\\
 \lambda_{12}(x)&=\gamma+0.1\arctan(|x|),&
 \lambda_{21}(x)&=\delta=0.05.\label{eq:example_rates}
\end{align}
Mode 1 is expansive with $c_1=-0.5$, whereas Mode 2 is contractive with $c_2=2$. The common noise and normal reflection give the no-flux FPK equation used in the numerical calculation. Moreover, $L_{12}\le0.1$, $L_{21}=0$, and the cross-mode drift bound holds with
\begin{equation}
 \kappa_{12}=\kappa_{21}=0,\qquad h_{12}=h_{21}=2.5.
 \label{eq:example_envelopes}
\end{equation}

Set $v_1=1$, $v_2=q\in(0,1)$, and $\beta_{12}=\beta_{21}=b$. Homogeneity permits $v_1=1$ instead of \eqref{eq:normalization}. The two same-mode inequalities reduce to
\begin{align}
 0.5+\gamma(q-1)+0.1b+\eta&\le0,\label{eq:example_same1}\\
 -2q+\delta(1-q)+\eta q&\le0.\label{eq:example_same2}
\end{align}
For either cross-mode ordering, a certified upper bound is
\begin{equation}
 qh-b(\gamma+\delta)+\delta(1-q)r+\eta(qr+b)\le0,
 \quad r\in[0,2],
 \label{eq:example_cross}
\end{equation}
where $h=2.5$. The left side is affine and increasing in $r$, so the endpoint $r=2$ is sufficient.

These scalar inequalities follow without a numerical relaxation. In Mode 1, $q-1<0$, the lower bound of $\lambda_{12}$ is $\gamma$, and $L_{12}=0.1$, hence
\begin{equation*}
 S_1=0.5+\gamma(q-1)+0.1b.
\end{equation*}
In Mode 2, $1-q>0$ and the rate is the constant $\delta$, which gives $S_2=-2q+\delta(1-q)$. For a cross-mode pair, $a_{12}=q$, a jump $1\to2$ removes the graph cost and contributes $-b\lambda_{12}(x)$, whereas a jump $2\to1$ contributes $\delta[(1-q)r-b]$. Since the first contribution is negative, its largest value is obtained at the lower rate bound $\lambda_{12}(x)=\gamma$.  Adding the drift envelope $qh$ and the decay term $\eta(qr+b)$ yields \eqref{eq:example_cross}. Finally,
$\delta(1-q)+\eta q>0$, so its left side increases with $r$ and the diameter endpoint $r=2$ is the exact worst case for this bound.

Table \ref{tab:certificate_audit} reports the three maximal residuals in \eqref{eq:example_same1}-\eqref{eq:example_cross}. Negative entries certify the claimed rate. When $\gamma=0.2$, \eqref{eq:example_same1} is larger than $0.3$ even as $q,b,\eta\downarrow0$, so this family of weights cannot certify a positive rate. This is a failure of the sufficient certificate, not a proof of instability.

\begin{table}[t]
\caption{Certificate residuals for the two-mode example ($b=1$)}
\label{tab:certificate_audit}
\centering
\footnotesize
\begin{tabular}{ccccc}
\toprule
$\gamma$ & $(q,\eta)$ & Mode 1 & Mode 2 & Cross mode \\
\midrule
2.0 & $(0.4,0.40)$ & $-0.20$ & $-0.61$ & $-0.27$\\
1.0 & $(0.2,0.10)$ & $-0.10$ & $-0.34$ & $-0.33$\\
0.2 & any positive rate & \multicolumn{3}{c}{infeasible by \eqref{eq:example_same1}}\\
\bottomrule
\end{tabular}
\end{table}

The interaction among the envelopes can be read directly from \eqref{eq:example_cross}. For fixed $(q,b,\eta)$, the largest admissible cross-mode mismatch is
\begin{equation}
 h_{\max}=\frac{b(\gamma+\delta)-2\delta(1-q)-\eta(2q+b)}{q}.
 \label{eq:hmax}
\end{equation}
The two certified cases have $h_{\max}=3.175$ and $4.15$, respectively, both above the actual value $2.5$. Increasing $b$ raises $h_{\max}$ through the synchronous cross-mode jump, but it simultaneously worsens \eqref{eq:example_same1} through the rate-Lipschitz penalty $0.1b$. This numerical identity illustrates the theoretical tradeoff in Proposition \ref{prop:sensitivity}.

\subsection{Distributional evolution and numerical verification}

For reproducibility, let
\begin{equation}
 g_m(x)=\frac{\exp[-(x-m)^2/(2\sigma_0^2)]}
 {\displaystyle\int_{-1}^{1}\exp[-(s-m)^2/(2\sigma_0^2)]\dd s},
 \qquad \sigma_0=0.15.
 \label{eq:initial_gaussian}
\end{equation}
The two laws compared in Fig. \ref{fig:wasserstein_decay} are $\rho^A_1(x,0)=g_{-0.45}(x)$ and $\rho^B_1(x,0)=g_{0.45}(x)$, with $\rho^A_2(x,0)=\rho^B_2(x,0)=0$. Denote the corresponding hybrid laws by $\rho(t)=(\rho_1^A(\cdot,t),\rho_2^A(\cdot,t))$ and $\widetilde\rho(t)=(\rho_1^B(\cdot,t),\rho_2^B(\cdot,t))$. These are the arguments of $\cW$ on the ordinate of Fig. \ref{fig:wasserstein_decay}. The density experiment in Fig. \ref{fig:fpk_evolution} instead uses $\rho_1(x,0)=g_0(x)$ and $\rho_2(x,0)=0$. Thus every initial law is normalized on the truncated computational domain, rather than on $\R$.

The no-flux FPK system is solved with pdepe toolbox, using relative and absolute tolerances $10^{-7}$ and $10^{-9}$. In the conservative form expected by that solver, the flux supplied for Mode $i$ is $\partial_x\rho_i-f_i\rho_i$, setting this quantity to zero at both endpoints implements \eqref{eq:no_flux}. The discrepancy calculation uses $81$ spatial nodes and $61$ output times on $[0,6]$, whereas the density plot uses $300$ nodes and $180$ output times on $[0,5]$. Trapezoidal masses are formed before each transport solve. Negative cell masses, if present, are replaced by zero, and the two marginals are then normalized separately before they enter the Kantorovich LP. In the reported runs no negative mass was removed and the largest normalization correction was $4.15\times10^{-6}$ and the largest postprocessed marginal-mass mismatch was $4.44\times10^{-16}$. The density arrays used for plotting were not clipped: their minimum was zero and their largest mass defect was $3.29\times10^{-8}$.

Figure \ref{fig:wasserstein_decay} shows the mode-weighted discrepancy for the three switching intensities. The two certified trajectories are plotted with their exponential bounds. The case $\gamma=0.2$ is not certified by this weight family but nevertheless decreases in this particular simulation. Such a trajectory is evidence about one pair of initial laws only and it does not replace the global pairwise estimate in Theorem \ref{thm:contraction}.

\begin{figure}[htbp]
\centering
\includegraphics[width=\linewidth]{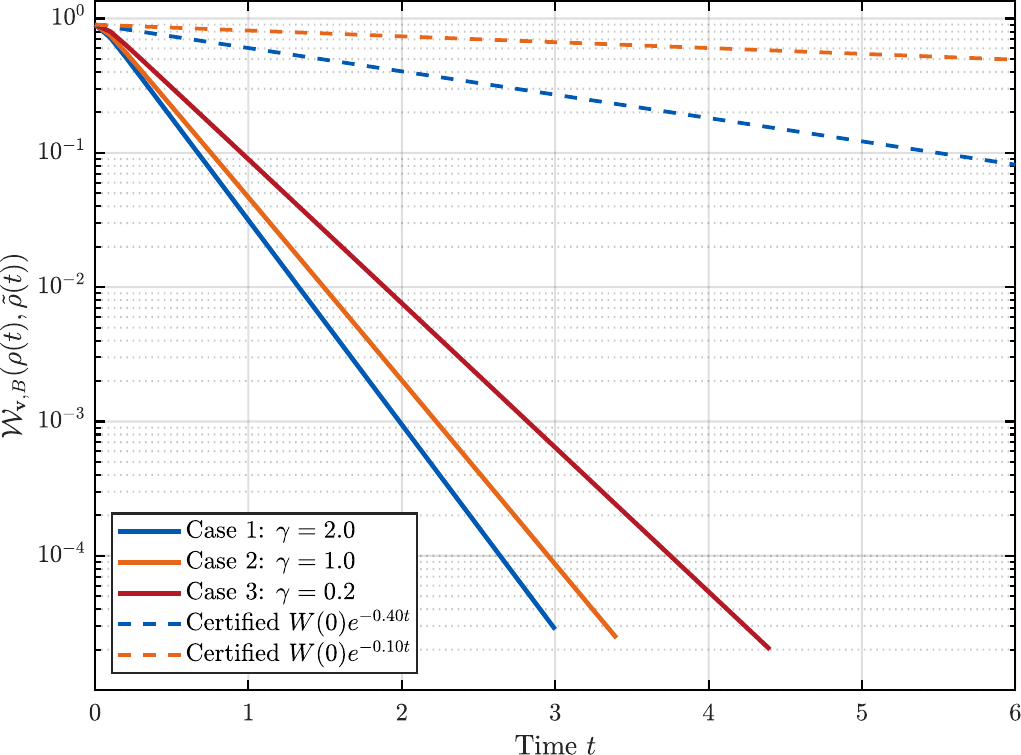}
\caption{Evolution of the mode-weighted transport discrepancy. The straight lines are the certified envelopes for $\gamma=2.0$ and $\gamma=1.0$. Each numerical curve is omitted after it first falls below the OT solver display threshold $2\times10^{-5}$.}
\label{fig:wasserstein_decay}
\end{figure}

The no-flux density evolution for $\gamma=2.0$ is shown in Fig. \ref{fig:fpk_evolution}. The density spreads under the expansive first-mode drift while state-dependent switching transfers probability to Mode 2. Compactness gives an invariant law, and Theorem \ref{thm:contraction} makes it unique and globally attractive in $\cW$.

\begin{figure}[htbp]
\centering
\includegraphics[width=\linewidth]{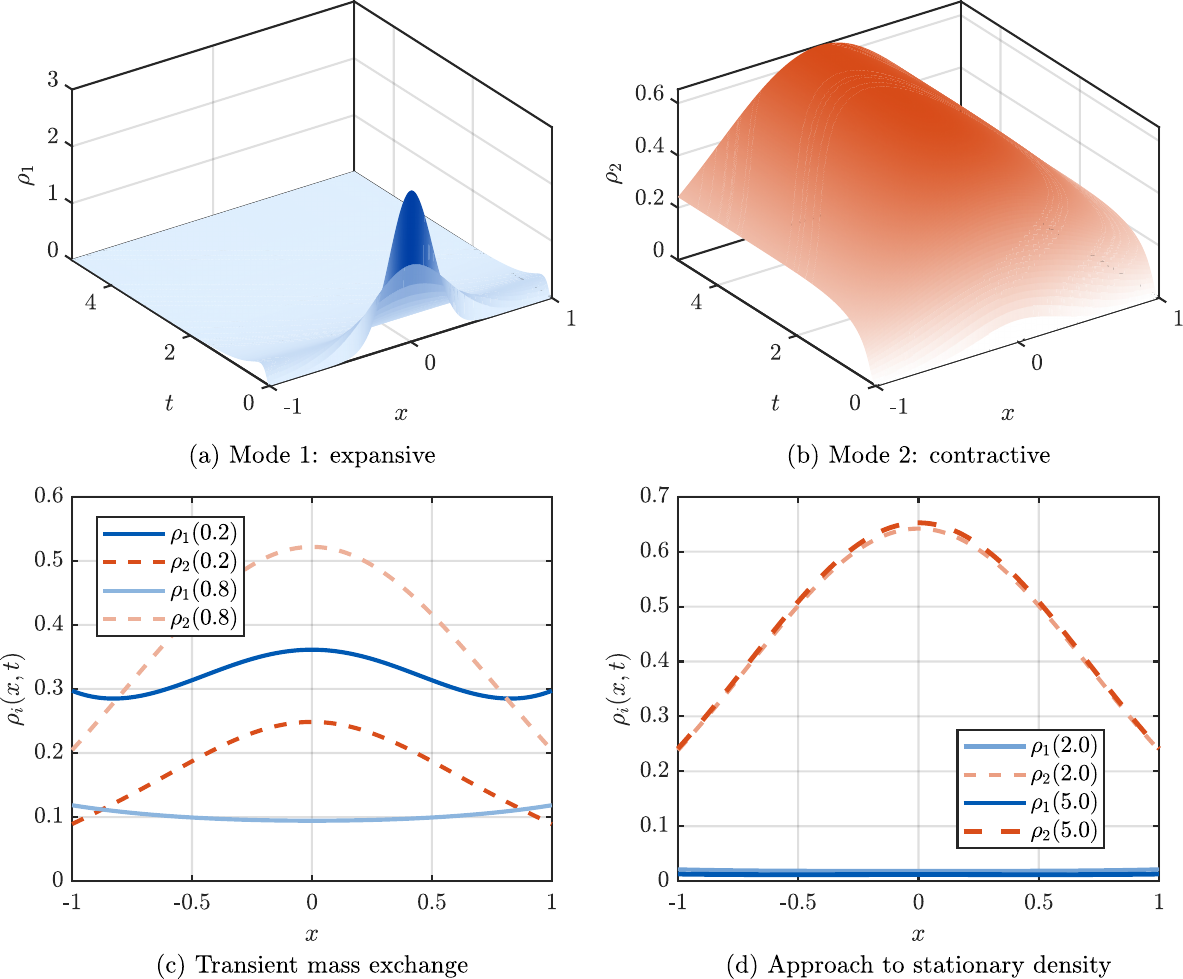}
\caption{No-flux FPK evolution on $[-1,1]$ for $\gamma=2.0$.}
\label{fig:fpk_evolution}
\end{figure}

\subsection{Planar three-mode certificate synthesis}

We next apply Algorithm \ref{alg:synthesis} to a planar three-mode system. Let $\Omega=\{x\in\R^2:\|x\|\le0.5\}$ and $f_i(x)=A_ix$, where
\begin{equation}
\begin{split}
A_1&=\begin{bmatrix}0.45&-0.25\\0.25&0.30\end{bmatrix},\quad
A_2=\begin{bmatrix}-1.80&0.30\\-0.30&-1.10\end{bmatrix},\\
A_3&=\begin{bmatrix}-0.90&-0.35\\0.35&-1.70\end{bmatrix}.
\end{split}
\label{eq:three_mode_drifts}
\end{equation}
Mode 1 is expansive and the other two modes are contractive, with $(c_1,c_2,c_3)=(-0.45,1.10,0.90)$. All six off-diagonal rates vary with both state coordinates:
\begin{align}
\lambda_{12}(x)&=3.20+0.50x_1+0.35x_2,\notag\\
\lambda_{13}(x)&=2.40-0.35x_1+0.45x_2,\notag\\
\lambda_{21}(x)&=0.25+0.12x_1-0.08x_2,\notag\\
\lambda_{23}(x)&=0.90+0.18x_1+0.14x_2,\notag\\
\lambda_{31}(x)&=0.20-0.10x_1+0.09x_2,\notag\\
\lambda_{32}(x)&=0.80+0.14x_1-0.16x_2.
\label{eq:three_mode_rates}
\end{align}
Writing a rate as $\ell_{ij}+d_{ij}^{\top}x$ gives the exact disk bounds
\begin{align*}
 \underline\lambda_{ij}&=\ell_{ij}-0.5\|d_{ij}\|,&
 \overline\lambda_{ij}&=\ell_{ij}+0.5\|d_{ij}\|,\\
 L_{ij}&=\|d_{ij}\|.&&
\end{align*}
The smallest lower bound is $0.133$, so every rate remains positive. For the cross-mode drift estimate we use $\kappa_{ij}=\lambda_{\max}[(A_i+A_i^\top)/2]=-c_i$ and $h_{ij}=0.5\|A_i-A_j\|_2$. These choices follow directly by writing $A_ix-A_jy=A_i(x-y)+(A_i-A_j)y$, and therefore do not rely on sampled drift values.

The state-dependent rate fields and the final separator diagnostics are collected in Fig. \ref{fig:three_mode_certificate}. The numerical construction of that certificate is described next.

\begin{figure*}[t]
\centering
\includegraphics[width=0.88\textwidth]{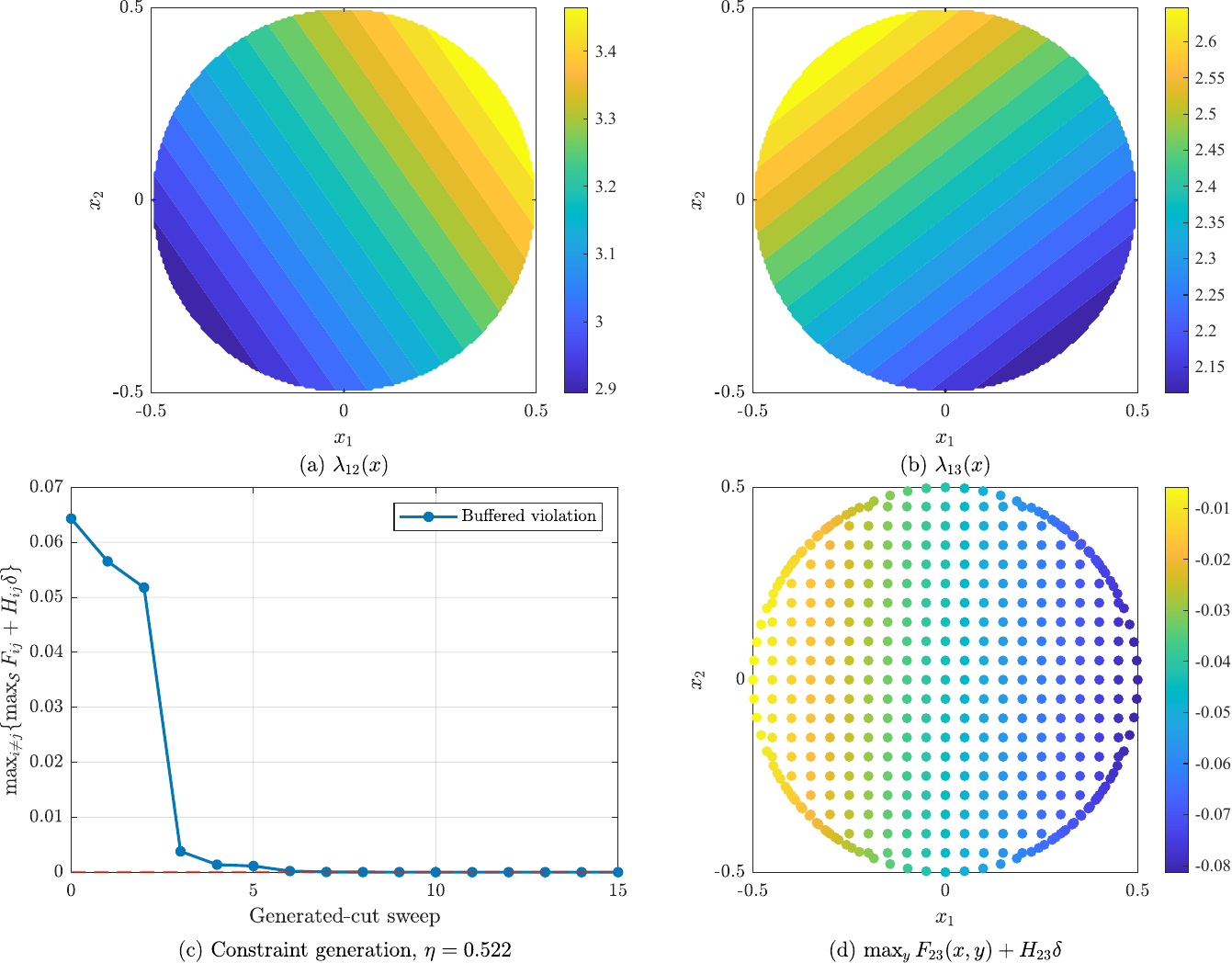}
\caption{Planar three-mode synthesis. Panels (a)-(b) show two of the state-dependent jump rates in \eqref{eq:three_mode_rates} and the remaining four have the same affine two-coordinate structure. Panel (c) shows the buffered separator residual during the final synthesis run at $\eta=0.522449$. Panel (d) gives $\max_y F_{23}(x,y)+H_{23}\delta$ for the retained rate $\eta=0.50$.}
\label{fig:three_mode_certificate}
\end{figure*}

The implementation enumerates all six weight orders, uses $\eps_v=0.04$, $\eps_\beta=0.02$, and bisects $\eta$ to a tolerance of $5\times10^{-4}$.  A nonbinding upper bound $\beta_{ij}\le4$ makes each LP compact.  Starting from five state pairs for each ordered mode pair, the separator scans the full prescribed mesh and adds a maximizer of the buffered violation.  The numerical bisection boundary estimates for the three orders that remain feasible within solver tolerance were
\begin{equation*}
\begin{array}{c|ccc}
\text{order}&v_3\ge v_1\ge v_2&v_1\ge v_2\ge v_3&v_1\ge v_3\ge v_2\\ \hline
\widehat\eta&0.0659&0.4707&0.5229\\
\text{new cuts}&10&14&15
\end{array}
\end{equation*}
and the other three orders were infeasible even at $\eta=0$. For the best order, the last synthesis run at $\eta=0.522449$ had maximum buffered residual $+4.825\times10^{-9}$, within solver tolerance, and returned
\begin{align}
\bm v&=(0.42958184,\ 0.27618608,\ 0.29423208),\notag\\
(\beta_{12},\beta_{13},\beta_{23})
&=(0.25459908,\ 0.26814957,\ 0.24155606).
\label{eq:three_mode_weights}
\end{align}
We therefore treat $\eta=0.522449$ only as a numerical boundary estimate and retain $\eta=0.50$ as the strict certificate. With the weights in \eqref{eq:three_mode_weights}, the three same-mode residuals are $-1.40\times10^{-2}$, $-6.20\times10^{-3}$, and $-6.61\times10^{-3}$.

It remains to verify the continuum of cross-mode inequalities. A square lattice of spacing $h=0.05$ is projected onto the disk, leaving $433$ distinct state points. Metric projection onto a closed convex set is nonexpansive.  Hence this state mesh has covering radius at most $\sqrt{2}h/2$, and its Cartesian square is a $\delta$-net of $\Omega^2$ with
\begin{equation*}
 \delta=\sqrt{2}h=0.070711
\end{equation*}
in the product norm of Proposition \ref{prop:mesh_certificate}. Each separator sweep evaluates all $433^2$ state pairs for each of the six ordered mode pairs. The final LP contains the $30$ initial cross-mode constraints and $15$ generated cuts. Table \ref{tab:three_mode_mesh_audit} summarizes the resulting finite-mesh verification. In every row, $\max_{\mathcal S}F_{ij}+H_{ij}\delta<0$. Proposition \ref{prop:mesh_certificate} therefore certifies $F_{ij}(x,y)\le0$ for every $(x,y)\in\Omega^2$, not only at the computed nodes.

\begin{table}[H]
\caption{Finite-mesh verification for the planar three-mode example}
\label{tab:three_mode_mesh_audit}
\centering
\footnotesize
\begin{tabular}{cccc}
\toprule
$(i,j)$ & $H_{ij}\delta$ & $\max_{\mathcal S}F_{ij}$ & certified residual\\
\midrule
$(1,2)$ & $3.952\!\times\!10^{-2}$ & $-5.187\!\times\!10^{-2}$ & $-1.235\!\times\!10^{-2}$\\
$(1,3)$ & $4.500\!\times\!10^{-2}$ & $-5.809\!\times\!10^{-2}$ & $-1.309\!\times\!10^{-2}$\\
$(2,1)$ & $5.222\!\times\!10^{-2}$ & $-3.645\!\times\!10^{-1}$ & $-3.123\!\times\!10^{-1}$\\
$(2,3)$ & $4.135\!\times\!10^{-2}$ & $-4.721\!\times\!10^{-2}$ & $-5.861\!\times\!10^{-3}$\\
$(3,1)$ & $5.436\!\times\!10^{-2}$ & $-2.719\!\times\!10^{-1}$ & $-2.175\!\times\!10^{-1}$\\
$(3,2)$ & $3.744\!\times\!10^{-2}$ & $-4.721\!\times\!10^{-2}$ & $-9.767\!\times\!10^{-3}$\\
\bottomrule
\end{tabular}
\end{table}

Figure \ref{fig:three_mode_certificate} displays two rate fields and the constraint-generation record. Its last panel plots the buffered residual after maximization over the second spatial argument. The least favorable ordered pair is $(2,3)$, and its residual remains strictly below zero throughout the disk.

\section{Conclusion}
\label{sec:conclusion}

The mode-weighted cost turns state-dependent switching into an explicit pairwise contraction test. Same-mode clock mismatches appear as rate-Lipschitz penalties and cross-mode drift mismatch remains visible through a spatial term even before the discrete components meet. Normal reflection has a favorable sign on convex domains, which carries the semigroup estimate to the no-flux FPK system. Fixed weight order and fixed decay rate leave a semi-infinite linear feasibility problem. The planar three-mode example shows that order enumeration, bisection, constraint generation, and the explicit mesh buffer can be combined in a single synthesis procedure, yielding a full-domain certificate rather than a grid-only test. The common isotropic diffusion coefficient is essential to the synchronous cancellation used here and to the simple normal-reflection/no-flux correspondence. Mode-dependent or anisotropic diffusion would require a different spatial geometry or a covariance coupling. Global cross-mode envelopes may also become conservative as the domain grows and partitioned envelopes offer one route to sharper conditions.

\bibliographystyle{IEEEtran}
\bibliography{ref}

\end{document}